\theoremstyle{plain}
\newtheorem{theorem}{Theorem}[section]
\newtheorem{proposition}[theorem]{Proposition}
\newtheorem{lemma}[theorem]{Lemma}
\newtheorem{corollary}[theorem]{Corollary}
\theoremstyle{definition}
\newtheorem{definition}[theorem]{Definition}
\newtheorem{remark}[theorem]{Remark}
\theoremstyle{remark}
\renewcommand{\labelenumi}{(\roman{enumi})}
\renewenvironment{thebibliography}[1]{%
\begin{oldthebibliography}{#1}%
\setlength{\baselineskip}{.9em}
\linespread{.9}
\small
\setlength{\parskip}{0ex}%
\setlength{\itemsep}{.1em}%
}%
{%
\end{oldthebibliography}%
}
\newcommand{\eps}{\varepsilon}
\newcommand{\N}{\mathbb{N}}
\newcommand{\R}{\mathbb{R}}
\newcommand{\cF}{\mathcal{F}}
\newcommand{\cN}{\mathcal{N}}
\newcommand{\cE}{\mathcal{E}}
\newcommand{\br}[1]{\langle #1 \rangle}
\DeclareMathOperator{\Var}{Var}
\newcommand{\sint}{\stackrel{\mbox{\tiny$\bullet$}}{}}
\numberwithin{equation}{section}
\begin{document}%

\title{\vspace{-0em} Small-Time Asymptotics of Option Prices\\ and First Absolute Moments}
\date{First version: June 11, 2010. This version: June 11, 2011.}
\author{
  Johannes Muhle-Karbe%
  \thanks{
  ETH Zurich, Department of Mathematics, R\"amistrasse 101, 8092 Zurich, Switzerland,
  email addresses \texttt{johannes.muhle-karbe@math.ethz.ch}, \texttt{marcel.nutz@math.ethz.ch}.
  }
  \and
  Marcel Nutz%
  \footnotemark[1]
}
\maketitle \vspace{-1em}

\begin{abstract}
We study the leading term in the small-time asymptotics of at-the-money call option prices when the stock price process $S$ follows a general martingale. This is equivalent to studying the first centered absolute moment of $S$. We show that if $S$ has a continuous part, the leading term is of order $\sqrt{T}$ in time $T$ and depends only on the initial value of the volatility. Furthermore, the term is linear in $T$ if and only if $S$ is of finite variation. The leading terms for pure-jump processes with infinite variation
are between these two cases; we obtain their exact form for stable-like small jumps.
To derive these results, we use a natural approximation of $S$
so that calculations are necessary only for the class of L\'evy processes.
\end{abstract}

{\small
\noindent \emph{Keywords} option price, absolute moment, small-time asymptotics, approximation by L\'evy processes

\noindent \emph{AMS 2000 Subject Classifications} Primary
91B25,   %
secondary
60G44. %

\noindent \emph{JEL Classification}
G13.%
}\\

\noindent \emph{Acknowledgements} We thank Martin Keller-Ressel, Sergey Nadtochiy and Mark Podolskij for discussions, and an anonymous referee for detailed comments. The first author was partially supported by the National Centre of Competence in Research ``Financial Valuation and Risk Management'' (NCCR FINRISK), Project D1 (Mathematical Methods in Financial Risk Management), of the Swiss National Science Foundation (SNF). The second author acknowledges financial support by Swiss National Science Foundation Grant PDFM2-120424/1.

\section{Introduction}

We consider the problem of option pricing in mathematical finance where the price of an option on a stock $S$ is calculated as the expectation under a risk-neutral measure. As usual we assume that the stock price is modeled directly under this measure and set the interest rate to zero. Therefore, our basic model consists of a c\`adl\`ag martingale $S$ on a filtered probability space
$(\Omega,\cF, (\cF_t)_{t\geq 0}, P)$ satisfying the usual hypotheses; we assume for simplicity that $\cF_0$ is trivial $P$-a.s. Our main interest concerns the small-time asymptotics of \emph{European call option prices},
\begin{equation}\label{eq:asymptoticIntro}
  E[(S_T-K)^+] \quad \mbox{for} \quad T\downarrow 0,
\end{equation}
where $x^+:=\max\{x,0\}$ and $K\in\R$ is the \emph{strike price} of the option. More precisely, we shall mostly be interested in the \emph{leading order} asymptotics for the \emph{at-the-money} case $K=S_0$.
Option price asymptotics are used in finance to find initial values for model calibration procedures
and also for model testing as explained below. From a probabilistic point of view one can note that,
up to a factor two, these are also the asymptotics of  the centered absolute first moment and, at least for continuous $S$, of the expected local time at the origin.

In applications $S$ is often specified via a stochastic differential equation driven by a Brownian motion $W$
and a Poisson random measure $N(dt,dx)$ with compensator $F(dx)dt$; i.e., $S$ is of the form
\begin{equation}\label{eq:canonRepresentationIto}
  S=S_0+\sigma\sint W + \kappa(x) \ast (N-F(dx)\,dt),
\end{equation}
where $\sigma=(\sigma_t)$ and $\kappa=(\kappa_t(x))$ are predictable integrands and we denote by $\sigma\sint W_u = \int_0^u \sigma_t\,dW_t$ the It\^o integral and by $\kappa(x) \ast (N-F(dx)\,dt)_u$ the integral $\int_0^u\int_{\R} \kappa_t(x)\,(N(dt,dx)-F(dx)\,dt)$ of random measures. In view of the applications, we shall adopt this representation for $S$, but we recall in Remark~\ref{re:representation} that this entails no essential loss of generality: every c\`adl\`ag martingale with absolutely continuous characteristics can be represented in the form~\eqref{eq:canonRepresentationIto}, and this includes all models of interest.

The \emph{main results} for the at-the-money asymptotics~\eqref{eq:asymptoticIntro} are obtained under certain right-continuity conditions on the mapping $t\mapsto (\sigma_t,\kappa_t)$. If we exclude the trivial case $S\equiv S_0$, the possible convergence rates range from $\sqrt{T}$ to $T$. If $\sigma_0\neq0$, i.e., in the presence of a Brownian component, the leading term is given by $(|\sigma_0|/\sqrt{2\pi}) \sqrt{T}$ irrespective of the jumps. On the other hand, the leading term is $CT$ if and only if $S$ is of finite variation, and then $C$ is given explicitly in terms of $\kappa_0$ and $F$. For pure-jump processes with infinite variation the rate may be anywhere between $\sqrt{T}$ and $T$ and it need not be a power of $T$. We consider a class of processes, containing most relevant examples, whose small jumps resemble the ones of an $\alpha$-stable L\'evy process, where $\alpha\in[1,2)$. For $\alpha>1$ the leading term is $C T^{1/\alpha}$ while for $\alpha=1$ we find $C T|\log T|$; the constants are given explicitly.

The \emph{basic idea} to obtain these results is to calculate the option price asymptotics for a simple model $Z$ which approximates $S$ in a suitable sense.  More precisely, we obtain a natural approximation by freezing the coefficients $\sigma$ and $\kappa$ in~\eqref{eq:canonRepresentationIto} at time $t=0$, namely
\begin{equation}\label{eq:LevyApproxIntro}
  Z=S_0 + \sigma_0 W +\kappa_0(x)*(N-F(dx)\,dt).
\end{equation}
Note that $\sigma_0$ and $\kappa_0(x)$ are deterministic since $\cF_0$ is trivial.
We show that the \emph{L\'evy process} $Z$ has the same leading order asymptotics as $S$ under mild regularity conditions. Therefore, an explicit treatment is necessary only in the L\'evy case, for which much ``finer'' arguments are possible.
We prove that we can pass with an error of order $O(T)$ from one pure-jump L\'evy process to another one when the small jumps have a similar behavior and use this to reduce even further to very particular L\'evy processes.

\vspace{-.5em}\paragraph{Literature.}
Due to their importance for model calibration and testing, small-time asymptotics of option prices have received considerable attention in recent years; see \cite{AlosLeonVives.07, BerestyckiEtAl.02, BerestyckiEtAl.04, CarrWu.03, Durrleman.08, Durrleman.10, FengFordeFouque.10, FordeFigueroaLopez.10, Forde.09, FordeJacquier.09, FordeJacquierLee.10, GatheralEtAl.09, MedvedevScaillet.07, Roper.08}. A survey of recent literature is given in the introduction of Forde et al.~\cite{FordeJacquierLee.10}. We shall review only the works most closely related to our study; in particular, we focus on the at-the-money case.
Here Carr and Wu~\cite{CarrWu.03} is an early reference with results in the spirit of ours. The authors obtain by partially heuristic arguments that the order of convergence for finite variation jumps is $O(\sqrt{T})$ in the presence of a Brownian component and $O(T)$ otherwise. This is in a general model including, e.g.,  exponential L\'evy processes; however, a boundedness assumption on the coefficients of the log price excludes the application to the Heston model, for example.
For the pure-jump case with infinite variation the authors mention that there is a range of possibilities for the order of convergence and they illustrate this by the so-called log stable model. Given option price data, the results are used to study whether the underlying process has jumps.

Durrleman~\cite{Durrleman.08} determines the rate $O(\sqrt{T})$ and the corresponding coefficient in a similar model, again with bounded coefficients and finite variation jumps. The result is stated in terms of the implied volatility, which is an alternative parametrization for option prices (see also Corollary~\ref{co:implied1}).
Forde~\cite{Forde.09} studies a class of continuous uncorrelated stochastic volatility models and computes explicitly the first two leading terms corresponding to the orders $T^{1/2}$ and $T^{3/2}$ while Forde et al.~\cite{FordeJacquierLee.10} obtain the same two coefficients for the Heston model with correlation. Forde and Figueroa-L\'{o}pez~\cite{FordeFigueroaLopez.10} determine the leading order term for the CGMY model. More generally, Tankov~\cite{Tankov.10} obtains several results similar to ours in the setting of exponential L\'evy models; detailed references are given below. Finally, for (arithmetic) L\'evy processes $S$, the at-the-money option price is related to the power variation of order one: if $T=\delta$ is the mesh size, $E[|S_T|] = E [n^{-1} \sum_{1\leq k\leq n} |S_{k\delta} -S_{(k-1)\delta}| ]$ for any $n\in\N$ by the i.i.d.\ property of the increments. Hence it is not surprising that we shall benefit from results of Jacod~\cite{Jacod.07} on asymptotic properties of power variations.

The present paper is organized as follows. Section~\ref{se:approx} contains the approximation result, which is stated for general martingales. Section~\ref{se:continuous} contains the analysis for the order~$\sqrt{T}$ and Section~\ref{se:discontinuous} the higher leading orders corresponding to the pure-jump case. An appendix contains some standard results about L\'evy processes that are used in the body of the text, often without further mention. We refer to the monograph of Jacod and Shiryaev~\cite{JacodShiryaev.03} for any unexplained notion or notation from stochastic calculus.

\section{Approximation of the Process $S$}\label{se:approx}

In this section we compare two martingales $S$ and $S'$ with different coefficients and study the distance
$|S_T-S'_T|$ in mean as $T\downarrow 0$. The main application to be used in the sequel is the case where $S'$ is the L\'evy approximation~\eqref{eq:LevyApproxIntro} of $S$. In that case, the assumption in the following result becomes a H\"older-type condition in mean for the coefficients $\sigma_t$ and $\kappa_t$ of $S$.

\begin{proposition}\label{pr:approx}
  Let $S$ be a martingale of the form~\eqref{eq:canonRepresentationIto} and $S'$ a martingale of the analogous form
  $S'=S'_0 + \sigma'\sint W +\kappa'(x)*(N-F(dx)dt)$ with $S'_0=S_0$. Let $\gamma\geq 0$.
  \begin{enumerate}[topsep=3pt, partopsep=0pt, itemsep=1pt,parsep=2pt]

  \item If $E [(\sigma_t-\sigma'_t)^2 ]=O(t^\gamma)$ and $E\big[\int_\mathbb{R} |\kappa_t(x)-\kappa'_t(x)|^2 F(dx)\big]=O(t^\gamma)$, then
      $E [|S_T-S'_T|^2 ]=O(T^{(1+\gamma)})$ and  $E[|S_T-S'_T|]=O(T^{(1+\gamma)/2})$. %

  \item Let $\beta\in [1,2]$. If $E\big[\int_\mathbb{R} |\kappa_t(x)-\kappa'_t(x)|^\beta F(dx)\big]=O(t^\gamma)$ and  $\sigma\equiv 0$, then $E[|S_T-S'_T|^\beta]=O(T^{(1+\gamma)})$ and $E[|S_T-S'_T|]=O(T^{(1+\gamma)/\beta})$. %

  \end{enumerate}
  The assertions remain valid if $O(\cdot)$ is replaced by $o(\cdot)$ throughout.
\end{proposition}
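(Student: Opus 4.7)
The plan is to set $M := S - S'$, which is again a martingale of the form~\eqref{eq:canonRepresentationIto} with coefficients $\sigma - \sigma'$ and $\kappa - \kappa'$. In each part I will obtain a bound on $E[|M_T|^\beta]$ of order $T^{1+\gamma}$ directly from the driving noise; the $L^1$ statement then follows from Jensen's inequality $E[|M_T|] \le (E[|M_T|^\beta])^{1/\beta}$. The $o(\cdot)$ assertion is essentially free, since the elementary observation that $h(t) = o(t^\gamma)$ as $t \downarrow 0$ implies $\int_0^T h(t)\,dt = o(T^{1+\gamma})$ (dominate $h(t)$ by $\eps t^\gamma$ near $0$) lets the Landau symbol pass through all the integrations below unchanged.

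For (i), orthogonality of continuous and purely discontinuous martingale parts together with the It\^o and compensated Poisson isometries gives
\begin{equation*}
  E[|M_T|^2] = E\!\left[\int_0^T (\sigma_t - \sigma'_t)^2\,dt\right] + E\!\left[\int_0^T\!\!\int_\R (\kappa_t(x) - \kappa'_t(x))^2\,F(dx)\,dt\right].
\end{equation*}
Fubini and the two hypotheses bound each summand by $\int_0^T C t^\gamma\,dt = O(T^{1+\gamma})$, proving the $L^2$ assertion and hence, via Cauchy--Schwarz, the $L^1$ one.

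For (ii), $M$ is a pure-jump martingale (both $\sigma$ and $\sigma'$ vanish in the intended reading), and the crux is the inequality
\begin{equation*}
  E[|M_T|^\beta] \le C_\beta\, E\!\left[\int_0^T\!\!\int_\R |\kappa_t(x) - \kappa'_t(x)|^\beta\,F(dx)\,dt\right] \quad (\beta \in [1,2]).
\end{equation*}
I would derive it by BDG to pass from $E[|M_T|^\beta]$ to $C_\beta E[[M]_T^{\beta/2}]$, then apply the pointwise bound $(\sum_i a_i^2)^{\beta/2} \le \sum_i a_i^\beta$ (valid for $\beta \in (0,2]$ by the decrease of $\ell^p$-norms in $p$) to $[M]_T = \sum_{s\le T}(\Delta M_s)^2$, and finally use the compensation formula $E\bigl[\sum_{s\le T}|\Delta M_s|^\beta\bigr] = E\bigl[\int_0^T\!\int_\R|\kappa_t - \kappa'_t|^\beta F(dx)\,dt\bigr]$. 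The hypothesis then gives $O(T^{1+\gamma})$ on the right-hand side. The main obstacle I anticipate is this $\beta$-moment inequality for compensated Poisson integrals in the intermediate range $\beta \in (1,2)$: at the endpoints it reduces to It\^o isometry or to the $L^1$-Kunita estimate, but for $\beta \in (1,2)$ one must invoke a Novikov- or Bichteler--Jacod-type bound and keep track of the correct constant.
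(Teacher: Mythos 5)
Your proposal is correct and follows essentially the same route as the paper: BDG (or, for $\beta=2$, the isometries), the pointwise comparison of $\ell^\beta$ and $\ell^2$ norms applied to the jump sizes, the compensation formula plus Fubini, and Jensen for the $L^1$ statements. The obstacle you anticipate for $\beta\in(1,2)$ is not one: the chain you describe (BDG to get $C_\beta E\big[[M]_T^{\beta/2}\big]$, then $\big(\sum_i a_i^2\big)^{\beta/2}\le\sum_i a_i^\beta$, then compensation) is already a complete, self-contained proof of the required $\beta$-moment inequality and is exactly the paper's argument.
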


\begin{proof}
  We set $X:=S-S'$ and denote by $X=X^c+X^d$ the decomposition into the continuous and purely discontinuous martingale parts; note that
  \[
    X^c = (\sigma-\sigma')\sint W \quad \mbox{and}\quad X^d=(\kappa(x)-\kappa'(x))*(N-F(dx)dt).
  \]
  Fix $\beta\in[1,2]$ and $\gamma\geq0$. We use the Burkholder-Davis-Gundy inequality (e.g., Protter~\cite[Theorem IV.48]{Protter.05}) to obtain
  \begin{align}\label{eq:proofApproxBDG}
  E\big[|X_T|^\beta\big]
      & \leq 2^{\beta-1} \Big( E\big[|X^c_T|^\beta\big] + E \big[|X^d_T|^\beta\big]\Big)  \nonumber \\
      & \leq C_\beta \Big( E\big[\br{X^c,X^c}_T^{\beta/2}\big] + E\big[[X^d,X^d]_T^{\beta/2}\big] \Big)
  \end{align}
  for a universal constant $C_\beta$ depending only on $\beta$. We first treat the pure-jump case~(ii), then $X^c\equiv0$
  and we only need to estimate the second expectation in~\eqref{eq:proofApproxBDG}. Recall that for a real sequence $y=(y_n)$ the norms $\|y\|_{\ell^p}=(\sum_n |y_n|^p)^{1/p}$ satisfy $\|y\|_{\ell^p}\geq \|y\|_{\ell^{q}}$ for $1\leq p\leq q<\infty$. We apply this for
  $p=\beta$ and $q=2$ to obtain that%
  \[
    [X^d,X^d]_T^{\beta/2} = \bigg(\sum_{t \leq T} |\Delta X_t|^2 \bigg)^{\beta/2} \leq \sum_{t \leq T} |\Delta X|^\beta
    = |\kappa(x)-\kappa'(x)|^\beta \ast N_T,
  \]
  hence using the definition of the compensator and Fubini's theorem we have
  \begin{align*}
    E\big[[X^d,X^d]_T^{\beta/2}\big]
    &\leq E\big[|\kappa(x)-\kappa'(x)|^\beta \ast N_T\big] \\
    &= E\big[|\kappa(x)-\kappa'(x)|^\beta \ast (F(dx)dt)_T \big]\\
    &= \int_0^T E\bigg[\int_\mathbb{R} |\kappa_t(x)-\kappa'_t(x)|^\beta F(dx)\bigg] \,dt.
  \end{align*}
  By assumption, the integrand is of order $O(t^\gamma)$, hence the integral is of order $O(T^{(1+\gamma)})$ and the first assertion of~(ii) follows by~\eqref{eq:proofApproxBDG}. The second assertion then follows by Jensen's inequality.

  We now turn to the case~(i). Of course, the previous estimates hold in particular for $\beta=2$, so it remains to consider the continuous part in~\eqref{eq:proofApproxBDG}. For $\beta=2$ this is
  \[
    E\big[\br{X^c,X^c}_T\big] = E\bigg[\int_0^T (\sigma_t-\sigma'_t)^2 \,dt\bigg] = \int_0^T E[(\sigma_t-\sigma'_t)^2] \,dt
  \]
  and so the conclusion is obtained as before. Finally, we note that the proof remains valid if $O(\cdot)$ is replaced by $o(\cdot)$ throughout.
\end{proof}

We illustrate the use of Proposition~\ref{pr:approx} by two applications to the approximation of stochastic differential equations (SDEs).
For the sake of clarity, we do not strive for minimal conditions.

\begin{corollary}\label{cor:levysde}
  Let $f: \mathbb{R} \to \mathbb{R}$ be continuously differentiable with bounded derivative and let
  $L$ be a square-integrable L\'evy martingale. Then the SDE
  \[
    dS_t = f(S_{t-})\, dL_t, \quad S_0\in\R
  \]
  has a unique solution $S$ and the L\'evy process $Z_t=S_0+f(S_0)L_t$ satisfies
  $ E[|S_T-Z_T|]=O(T)$ as $T\downarrow 0$.
\end{corollary}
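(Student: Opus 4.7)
The plan is to cast $S$ in the form~\eqref{eq:canonRepresentationIto} and invoke Proposition~\ref{pr:approx}(i) with $\gamma = 1$. Existence and uniqueness of $S$ follow from standard Lipschitz SDE theory (e.g., Protter~\cite[Theorem V.7]{Protter.05}), since the hypothesis that $f'$ is bounded makes $f$ globally Lipschitz with linear growth.

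For the approximation step, I would use the L\'evy-It\^o decomposition of the square-integrable martingale $L$ to write $L = \sigma W + x*(N - F(dx)\,dt)$ with $\sigma \in \R$ and $\int_\R x^2\, F(dx) < \infty$. By associativity of stochastic integration, $S$ then has the representation~\eqref{eq:canonRepresentationIto} with $\sigma_t = \sigma f(S_{t-})$ and $\kappa_t(x) = f(S_{t-})\,x$, while the L\'evy process $Z$ corresponds to the constant coefficients $\sigma'_t \equiv \sigma f(S_0)$ and $\kappa'_t(x) \equiv f(S_0)\,x$. Setting $K := \sup|f'|$ and using the Lipschitz bound $|f(S_{t-}) - f(S_0)| \leq K |S_{t-} - S_0|$, both hypotheses of Proposition~\ref{pr:approx}(i) reduce to the single moment estimate $E[(S_{t-} - S_0)^2] = O(t)$ as $t\downarrow 0$, with the implicit constant depending on $\sigma$, $K$ and $\int_\R x^2\, F(dx)$.

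Establishing this moment bound is the main technical step. Applying the Burkholder-Davis-Gundy inequality to the local martingale $S - S_0$, together with $[S,S]_t = \int_0^t f(S_{s-})^2\, d[L,L]_s$ and the fact that $[L,L]$ has deterministic compensator $\bigl(\sigma^2 + \int_\R x^2\, F(dx)\bigr)\, t$, gives
\[
E\bigl[(S_t - S_0)^2\bigr] \leq C \int_0^t E\bigl[f(S_{s-})^2\bigr]\, ds \leq C_1 t + C_2 \int_0^t E\bigl[(S_{s-} - S_0)^2\bigr]\, ds,
\]
where the last inequality uses $f(y)^2 \leq 2 f(S_0)^2 + 2 K^2 (y - S_0)^2$. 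Gr\"onwall's lemma then yields $E[(S_t - S_0)^2] = O(t)$ on any bounded time interval. Feeding this back into Proposition~\ref{pr:approx}(i) with $\gamma = 1$ delivers $E[|S_T - Z_T|] = O(T^{(1+1)/2}) = O(T)$. The only subtlety I foresee is justifying the true-martingale property (and hence BDG in its clean form) for $S - S_0$; a routine localization using the linear growth of $f$ and the $L^2$-bound from Protter's SDE theory handles this.
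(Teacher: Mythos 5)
Your proof is correct and follows essentially the same route as the paper: write $S$ in the form~\eqref{eq:canonRepresentationIto}, apply Proposition~\ref{pr:approx}(i) with $\gamma=1$, and reduce everything to the moment bound $E[(S_t-S_0)^2]=O(t)$. The only difference is cosmetic: where you establish that bound by a direct BDG--Gr\"onwall estimate (the Gr\"onwall step being in fact unnecessary once the local boundedness of $t\mapsto E[S_t^2]$ from Protter's $L^2$ theory is in hand), the paper obtains it by a second application of Proposition~\ref{pr:approx}(i), this time with $\gamma=0$ and $S'\equiv S_0$.
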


\begin{proof}
  We recall from \cite[Theorem V.67]{Protter.05} that the SDE has a unique strong solution $S$ and that $t \mapsto E[S_t^2]$ is locally bounded. The L\'evy process $L$ has a representation of the form $L=cW + x\ast (N-F(dx)dt)$ and then
  \begin{align*}
    S & = S_0 + cf(S_{-}) \sint W+(f(S_{-})x)*(N-F(dx)dt),\\
    Z & = S_0 + cf(S_{0}) \sint W+(f(S_{0})x)*(N-F(dx)dt).
  \end{align*}
  It suffices to verify the conditions of Proposition~\ref{pr:approx}(i) for $\gamma=1$ and $S'=Z$. Since $S_t=S_{t-}$ $P$-a.s.\ for each $t$,
  we have
  $E[(\sigma_t-\sigma'_t)^2 ]= c^2 E[|f(S_t)-f(S_0)|^2]$ and
  $E\big[\int_\mathbb{R} |\kappa_t(x)-\kappa'_t(x)|^2 F(dx)\big] = E[|f(S_t)-f(S_0)|^2] \int |x|^2 F(dx)$. The last integral is
  finite since $L$ is square-integrable (Lemma~\ref{le:LevyFacts}(vi)), and so it suffices to show that
  $E[|f(S_t)-f(S_0)|^2]=O(t)$. Now $f$ is Lipschitz-continuous by assumption, so it remains to prove that
  \[
    E[|S_t-S_0|^2]=O(t).
  \]
  For this, in turn, it suffices to verify the conditions of Proposition~\ref{pr:approx}(i) for
  $\gamma=0$ and $S'\equiv S_0$ (and hence $\sigma'\equiv0$ and $\kappa'\equiv0$). Indeed,
  we have that
  $E[|\sigma_t|^2] = c^2 E[|f(S_t)|^2]=O(1)$ as
  the linear growth of $f$ ensures that $E[f(S_t)^2]$ is again locally bounded like $E[S_t^2]$, and
  similarly we have that
  $E\big[\int_\mathbb{R} |\kappa_t(x)|^2 F(dx)\big]=E[f(S_t)^2] \int  |x|^2 F(dx)=O(1)$  as $t\downarrow 0$.
\end{proof}

\begin{remark}
  If the square-integrable L\'evy martingale $L$ does not have a Brownian component and if its L\'evy measure $F$ satisfies
  \[
    \int_\mathbb{R} |x|^\beta F(dx)<\infty
  \]
  for some $\beta\in [1,2]$, then the assertion in Corollary~\ref{cor:levysde} can be strengthened to
  $E[|S_T-Z_T|]=O(T^{2/\beta})$.
  In particular, this applies for $\beta=1$ when $L$ is of finite variation. %
  The proof is as above, using part~(ii) of Proposition~\ref{pr:approx} instead of part~(i).
\end{remark}

We give a second example where the coefficient of the SDE is not Lipschitz-continuous, as this sometimes occurs in stochastic volatility models.

\begin{corollary}\label{cor:continuoussv}
  Assume that $S$ solves the SDE
  \[
    dS_t=S_{t}\sqrt{v_{t-}}\;dW_t,\quad S_0\in\R,
  \]
  where $v\geq0$ is a c\`adl\`ag adapted process.
  If $t \mapsto E[v^{2+\eps}_t]$ and $t \mapsto E[S_t^{4+\eps}]$ are bounded in a neighborhood of zero for some $\eps>0$,
  then the L\'evy process
  $Z_t=S_0+S_0\sqrt{v_0}\, W_t$
  satisfies
  $E[|S_T-Z_T|]=o(\sqrt{T})$.

  In particular, this applies when $v$ is a square-root process, i.e., when $S$ is the Heston model.
\end{corollary}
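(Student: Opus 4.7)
The natural move is to apply Proposition~\ref{pr:approx}(i) in its $o(\cdot)$ form. Here the only randomness sits in the diffusion coefficient: $S$ has the representation~\eqref{eq:canonRepresentationIto} with $\sigma_t=S_t\sqrt{v_{t-}}$ and $\kappa\equiv 0$, while $Z$ has $\sigma'_t\equiv S_0\sqrt{v_0}$ and $\kappa'\equiv 0$. Thus it suffices to prove that
\[
  E\big[(S_t\sqrt{v_{t-}}-S_0\sqrt{v_0})^2\big]=o(1)\quad\text{as }t\downarrow0,
\]
i.e., the hypothesis of Proposition~\ref{pr:approx}(i) holds with $\gamma=0$ and $o(\cdot)$, which yields $E[|S_T-Z_T|]=o(\sqrt{T})$.

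The plan for the $L^2$ convergence is the usual pointwise convergence plus uniform integrability. For the pointwise part, $v$ is c\`adl\`ag so $v_{t-}\to v_0$ $P$-a.s.\ as $t\downarrow 0$, and $S$, being a stochastic integral against $W$, is continuous so $S_t\to S_0$ $P$-a.s. Hence $S_t\sqrt{v_{t-}}\to S_0\sqrt{v_0}$ $P$-a.s., and it remains to establish uniform integrability of the family $\{S_t^2 v_{t-}:t\in[0,T_0]\}$ for some $T_0>0$.

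For this, de la Vall\'ee-Poussin reduces matters to bounding $E[(S_t^2 v_{t-})^{1+\delta}]$ for some $\delta>0$. An application of H\"older's inequality with conjugate exponents $p=(2+\eps)/(1+\delta)$ and $q=p/(p-1)$ gives
\[
  E\big[v_{t-}^{1+\delta}\,S_t^{2(1+\delta)}\big]
  \leq E\big[v_{t-}^{2+\eps}\big]^{1/p}\,E\big[S_t^{2(1+\delta)q}\big]^{1/q}.
\]
As $\delta\downarrow 0$ the exponent $2(1+\delta)q$ tends to $2(2+\eps)/(1+\eps)=4-2\eps/(1+\eps)<4+\eps$, so by continuity one can fix $\delta>0$ small enough that $2(1+\delta)q\leq 4+\eps$. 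The assumed local boundedness of $t\mapsto E[v_t^{2+\eps}]$ and $t\mapsto E[S_t^{4+\eps}]$ (together with $v_{t-}=v_t$ for a.e.\ $t$, $P$-a.s.) then gives a uniform bound on the right-hand side for $t$ in a neighborhood of $0$, establishing uniform integrability and hence the desired $L^2$ convergence.

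For the Heston specialization, one only needs to verify the two moment bounds. When $v$ is a CIR process, all positive moments $E[v_t^r]$ are finite and locally bounded at $t=0$ (they solve linear ODEs obtained from the affine structure), so $E[v_t^{2+\eps}]$ is fine for any $\eps>0$. For $S$, Andersen and Piterbarg's moment explosion analysis (or a direct affine computation) shows that for every $p\geq 1$ there is a critical time $T^*(p)>0$ up to which $E[S_t^p]$ is finite and locally bounded; choosing $\eps$ small and restricting $t$ to $[0,T^*(4+\eps))$ supplies the $4+\eps$ bound near $0$. The most delicate point in the argument is the H\"older-based uniform integrability estimate, since one must cooperate the two moment hypotheses to cover the product $S_t^2 v_{t-}$; everything else is essentially the continuity of $(S,v)$ at $0$ combined with Proposition~\ref{pr:approx}.
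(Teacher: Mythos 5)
Your argument is correct and follows essentially the same route as the paper: reduce via Proposition~\ref{pr:approx}(i) with $\gamma=0$ to the $L^2$-convergence $E[(S_t\sqrt{v_{t-}}-S_0\sqrt{v_0})^2]\to0$, and obtain that from right-continuity at $0$ plus uniform integrability of $S_t^2v_{t-}$, the latter being exactly the paper's ``Cauchy--Schwarz and uniform integrability'' step in slightly more explicit H\"older form. The only point to tighten is the transfer from $E[v_t^{2+\eps}]$ to $E[v_{t-}^{2+\eps}]$: ``$v_{t-}=v_t$ for a.e.\ $t$'' does not rule out a fixed time of discontinuity at a given $t$, so either use Fatou ($E[v_{t-}^{2+\eps}]\leq\liminf_{s\uparrow t}E[v_s^{2+\eps}]$) or, as the paper does, note that by continuity of $W$ one may replace $v_-$ by $v$ in the SDE from the outset.
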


\begin{proof}
  By Proposition~\ref{pr:approx}(i) applied with $\gamma=0$, it suffices to verify that $E[(S_t\sqrt{v_t}-S_0\sqrt{v_0})^2]=o(1)$; notice that by continuity of $W$ the SDE does not change if one replaces $v_-$ by $v$. In view of
  \[
    E[(S_t\sqrt{v_t}-S_0\sqrt{v_0})^2]=E[S_t^2v_t-S^2_0v_0]+2S_0\sqrt{v_0}\,E[S_0\sqrt{v_0}-S_t \sqrt{v_t}\,]
  \]
  it suffices to check that $E[S^2_t v_t]\to S^2_0 v_0$ and $E[S_t\sqrt{v_t}]\to S_0\sqrt{v_0}$ as $t \downarrow 0$. Since $S\sqrt{v}$ is right-continuous, this readily follows by the Cauchy-Schwarz inequality and uniform integrability. That the assumptions are satisfied for the Heston model follows from,
  e.g., Cox et al.~\cite[Section~3]{CoxIngersollRoss.85} and the proof of
  Andersen and Piterbarg~\cite[Proposition 3.1]{AndersenPiterbarg.07}.
\end{proof}

\section{Option Price of Order $\sqrt{T}$}\label{se:continuous}

The main idea in this section is to calculate the option price for $S$ from~\eqref{eq:canonRepresentationIto}
via the approximation
\begin{equation}\label{eq:LevyApproxBody}
  Z:=S_0 + \sigma_0 W +\kappa_0(x)*(N-F(dx)dt).
\end{equation}
We first have to ensure that this expression makes sense. %
Indeed, if $S$ is a martingale, it follows that
\begin{equation}\label{eq:integrabilityS}
  \int_\R |\kappa_t(x)| \wedge |\kappa_t(x)|^2\,F(dx)<\infty\quad P\otimes dt\mbox{-a.e.},
\end{equation}
but of course this may fail on the nullset $\{t=0\}$.
Hence we make the \emph{standing assumption} that $Z$ is well defined and integrable; i.e., that $\kappa_0(x)$ is Borel-measurable and satisfies
\begin{equation}\label{eq:integrabilityZ}
  \int_\R |\kappa_0(x)| \wedge |\kappa_0(x)|^2\,F(dx)<\infty.
\end{equation}
In any reasonable situation, one will be able to infer this condition from~\eqref{eq:integrabilityS}.

We can now prove our result for the at-the-money option price of order $\sqrt{T}$.
In fact, we describe the slightly more general situation of \emph{almost} at-the-money strikes by considering a deterministic strike function $T\mapsto K_T$ such that $K_T\to S_0$ as $T\downarrow 0$. The main observation is that the coefficient of order $\sqrt{T}$ depends only on the initial value of $\sigma$ and that the jumps are irrelevant at this order. We denote by $\cN$ the Gaussian distribution.

\begin{theorem}\label{th:callATM}
  Let $S$ be a martingale of the form~\eqref{eq:canonRepresentationIto} and assume that
  \[
    \lim_{t\downarrow 0} E[(\sigma_t-\sigma_0)^2]=0\quad \mbox{and} \quad \lim_{t\downarrow 0} E\bigg[\int_{\R} |\kappa_t(x)-\kappa_0(x)|^2\,F(dx)\bigg]=0.
  \]
  If $K_T=S_0+ \theta \sqrt{T} + o(\sqrt{T})$ for some $\theta\in\R$, then
  \begin{equation}\label{eq:callATMExpansion}
    E[(S_T-K_T)^+]= E[\cN(-\theta,\sigma_0^2)^+]\,\sqrt{T} + o(\sqrt{T})\quad \mbox{as } T\downarrow 0.
  \end{equation}
  In particular, for the at-time-money case $K\equiv S_0$ we have that
  \[
    E[(S_T-S_0)^+]= \frac{|\sigma_0|}{\sqrt{2\pi}} \sqrt{T} + o(\sqrt{T})\quad \mbox{as } T\downarrow 0.
  \]
\end{theorem}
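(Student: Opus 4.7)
The natural plan is to replace $S$ by its Lévy approximation $Z$ defined in \eqref{eq:LevyApproxBody} and then analyze $(Z_T-K_T)^+$ directly. The two hypotheses on the coefficients are precisely what is needed to apply Proposition~\ref{pr:approx}(i) with $\gamma=0$ and $S'=Z$ in its $o(\cdot)$ version, which yields $E[|S_T-Z_T|]=o(\sqrt{T})$. Since $x\mapsto x^+$ is $1$-Lipschitz, this gives
\[
  \big|E[(S_T-K_T)^+] - E[(Z_T-K_T)^+]\big| \leq E[|S_T-Z_T|] = o(\sqrt{T}),
\]
so it remains to establish~\eqref{eq:callATMExpansion} for the Lévy process $Z$.

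Write $Z_T-S_0 = \sigma_0 W_T + J_T$ where $J_T := \kappa_0(x)*(N-F(dx)\,dt)_T$ is a pure-jump Lévy martingale, and fix the strike error $K_T-S_0-\theta\sqrt{T}=:\eta_T = o(\sqrt{T})$. Using the Lipschitz property of $(\cdot)^+$ once more,
\[
  \big|E[(Z_T-K_T)^+] - E[(\sigma_0 W_T-\theta\sqrt{T})^+]\big| \leq E[|J_T|] + |\eta_T|.
\]
The exact Gaussian term is trivial to compute by scaling: $\sigma_0 W_T/\sqrt{T}\sim \cN(0,\sigma_0^2)$, hence
\[
  E[(\sigma_0 W_T-\theta\sqrt{T})^+] = \sqrt{T}\,E[\cN(-\theta,\sigma_0^2)^+],
\]
and for $\theta=0$ the standard formula $E[\cN(0,\sigma_0^2)^+]=|\sigma_0|/\sqrt{2\pi}$ gives the at-the-money expression. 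So everything reduces to showing
\[
  E[|J_T|] = o(\sqrt{T}) \quad \text{as } T\downarrow 0.
\]

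This is the main obstacle, because the standing assumption~\eqref{eq:integrabilityZ} only gives $\int |\kappa_0|\wedge|\kappa_0|^2\,F(dx)<\infty$ and allows $J$ to have infinite variation. I would handle it by a three-piece truncation at a level $\eps\in(0,1]$: split $J = J^L + J^M + J^{S,\eps}$ according to $|\kappa_0(x)|>1$, $\eps<|\kappa_0(x)|\leq 1$, $|\kappa_0(x)|\leq\eps$. The large part $J^L$ is a compensated compound Poisson with $\int_{|\kappa_0|>1}|\kappa_0|\,F<\infty$, giving $E[|J^L_T|]\leq 2T\int_{|\kappa_0|>1}|\kappa_0|\,F(dx)=O(T)$. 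The same bound applies to $J^M$ since its finite-variation integrand is bounded, yielding $E[|J^M_T|]=O(T)$. For the small part, the $L^2$-isometry gives $E[|J^{S,\eps}_T|]\leq \sqrt{T}\,\big(\int_{|\kappa_0|\leq\eps}|\kappa_0|^2\,F(dx)\big)^{1/2}$. Combining these,
\[
  \limsup_{T\downarrow 0}\frac{E[|J_T|]}{\sqrt{T}} \leq \bigg(\int_{|\kappa_0|\leq\eps}|\kappa_0(x)|^2\,F(dx)\bigg)^{\!1/2}.
\]
The integrability condition~\eqref{eq:integrabilityZ} forces $\int_{|\kappa_0|\leq 1}|\kappa_0|^2\,F<\infty$, so by dominated convergence the right-hand side tends to $0$ as $\eps\downarrow 0$. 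This proves $E[|J_T|]=o(\sqrt{T})$ and completes the argument.
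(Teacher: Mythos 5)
Your proposal is correct, and its overall architecture coincides with the paper's: reduce to the L\'evy approximation $Z$ via Proposition~\ref{pr:approx}(i) with $\gamma=0$ in the $o(\cdot)$ version, compute the Gaussian term exactly by scaling, and show that the pure-jump part contributes only $o(\sqrt{T})$ in $L^1$. The one place where you genuinely diverge is this last step. The paper splits the jump part into a compound Poisson piece (handled as $O(T)$) plus a square-integrable L\'evy martingale, and for the latter invokes Jacod's power-variation result \cite[Lemma~4.1]{Jacod.07} to get $|S_T|/\sqrt{T}\to 0$ in probability, upgrading this to $L^1$-convergence via the uniform integrability supplied by the $L^2$-bound $\|S_T/\sqrt{T}\|_{L^2}^2=E[[S,S]_1]$. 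You instead introduce a free truncation level $\eps$, bound the large- and medium-jump pieces by $O(T)$ (using, as you should note explicitly, that $\int_{\eps<|\kappa_0|\le 1}|\kappa_0|\,F(dx)\le \eps^{-1}\int|\kappa_0|^2 1_{\{|\kappa_0|\le1\}}\,F(dx)<\infty$ by~\eqref{eq:integrabilityZ}), control the small-jump piece by the It\^o isometry as $\sqrt{T}\,(\int_{|\kappa_0|\le\eps}|\kappa_0|^2F(dx))^{1/2}$, and then send $\eps\downarrow 0$ after $T\downarrow 0$. This is a clean quantitative replacement for the citation: it is entirely self-contained and makes transparent exactly which integrability of $\kappa_0$ is being used, at the modest cost of a three-piece decomposition and an extra limit interchange (which is justified, since your bound holds for every fixed $\eps$ and dominated convergence kills the right-hand side). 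Your handling of the almost-at-the-money strike via the Lipschitz property of $x\mapsto x^+$ is also a perfectly adequate substitute for the paper's put-call-parity reduction to absolute moments. I see no gap.
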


\begin{remark}\label{rk:callATM}
  \begin{enumerate}[topsep=3pt, partopsep=0pt, itemsep=1pt,parsep=2pt]\renewcommand{\labelenumi}{(\alph{enumi})}
  \item The form $K_T=S_0+ \theta \sqrt{T} + o(\sqrt{T})$ chosen in the theorem is in fact the only relevant one. Indeed, if the convergence $K_T\to S_0$ is slower than $\sim C\sqrt{T}$, then the leading order asymptotics of $E[(S_T-K_T)^+]$ will simply be determined by $(S_0-K_T)^+$, and if it is faster, we find the same asymptotics as in the at-the-money case.
      As usual, we write $f(T)\sim g(T)$ if $f(T)/g(T)\to1$ as $T\downarrow 0$.
     One can also note that the constant $\theta$ satisfies
     $\theta=(S_0-K_T)/\sqrt{T} + o(1)$ and can therefore be interpreted as a \emph{degree of moneyness} for the option; a similar notion was previously used by Medvedev and Scaillet~\cite{MedvedevScaillet.07}.

  \item Even for the class of continuous martingales, $\sqrt{T}$ is the highest order in which the option price depends only on the initial volatility $\sigma_0$. Indeed, assume that this were also the case for the order $T^{1/2+\eps}$ and some $\eps>0$.
  Set $\sigma_t=\sqrt{(1+2\eps)\,t^{2\eps}}$ and define the martingale $S_t=\int_0^t \sigma_s\,dW_s$.
  Then $S$ is Gaussian with variance $\Var(S_T)=\int_0^T \sigma_t^2\,dt=T^{1+2\eps}$ and hence
  \[
    E[S_T^+]=\frac{1}{\sqrt{2\pi}} \,T^{1/2+\eps}. %
  \]
  This coefficient is of course different from the one for $S'\equiv0$, which is another martingale with $\sigma_0 =0$.
  \end{enumerate}
\end{remark}

Let us formulate the theorem once more, in the language preferred by practitioners. If $S > 0$, the \emph{implied volatility} $\sigma_{impl}(T) \in [0,\infty]$ of an at-the-money call option with maturity $T$ is defined as the solution of
\begin{equation}\label{eq:implied}
  E[(S_T-S_0)^+]=S_0 \Phi\left(\frac{1}{2}\sigma_{impl}(T)\sqrt{T}\right)-S_0 \Phi\left(-\frac{1}{2}\sigma_{impl}(T)\sqrt{T}\right),
\end{equation}
where $\Phi$ denotes the standard normal distribution function. This means that the option price coincides with its counterpart in a Black-Scholes model with volatility parameter $\sigma_{impl}(T)$. Note that $\sigma_{impl}(T)$ exits and is unique since $x \mapsto \Phi(x)-\Phi(-x)$ is strictly increasing and maps $[0,\infty]$ to $[0,1]$ and moreover $E[(S_T-S_0)^+] \in [0,S_0]$.
The following generalizes the result of~\cite{Durrleman.08} to unbounded coefficients and infinite variation jumps.

\begin{corollary}\label{co:implied1}
  Let $S > 0$. Under the conditions of Theorem~\ref{th:callATM} we have $\sigma_{impl}(T) \to |\sigma_0|/S_0$; i.e., the implied volatility converges to the spot volatility of the continuous martingale part of the log price.
\end{corollary}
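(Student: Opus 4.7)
The plan is to apply Theorem~\ref{th:callATM} in the at-the-money case $K\equiv S_0$, giving
\[
E[(S_T-S_0)^+]=\frac{|\sigma_0|}{\sqrt{2\pi}}\sqrt{T}+o(\sqrt{T}),
\]
and then to invert the Black-Scholes pricing formula~\eqref{eq:implied}. Setting $y_T:=\tfrac{1}{2}\sigma_{impl}(T)\sqrt{T}$, equation~\eqref{eq:implied} becomes
\[
S_0\bigl[\Phi(y_T)-\Phi(-y_T)\bigr]=\frac{|\sigma_0|}{\sqrt{2\pi}}\sqrt{T}+o(\sqrt{T}).
\]
Since $E[(S_T-S_0)^+]\to 0<S_0$ and the map $y\mapsto\Phi(y)-\Phi(-y)$ takes $[0,\infty]$ bijectively and strictly increasingly onto $[0,1]$, we have $\sigma_{impl}(T)<\infty$ for all small enough $T$, so $y_T\in[0,\infty)$.

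The next step is to show $y_T\to 0$ as $T\downarrow 0$. This follows because the right-hand side above vanishes and $y\mapsto\Phi(y)-\Phi(-y)$ is continuous and strictly increasing with value $0$ at the origin; strict monotonicity forces the unique preimage $y_T$ to tend to $0$.

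Having justified smallness of $y_T$, I would Taylor-expand the integrand of $\Phi(y_T)-\Phi(-y_T)=\int_{-y_T}^{y_T}\frac{1}{\sqrt{2\pi}}e^{-u^2/2}\,du$ around zero to get
\[
\Phi(y_T)-\Phi(-y_T)=\frac{2y_T}{\sqrt{2\pi}}\bigl(1+O(y_T^2)\bigr).
\]
Substituting back and solving for $y_T$ yields $y_T=\frac{|\sigma_0|\sqrt{T}}{2S_0}(1+o(1))$, hence $\sigma_{impl}(T)=2y_T/\sqrt{T}\to |\sigma_0|/S_0$, as claimed.

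There is no real obstacle here; the result is a routine inversion of the implicit definition of $\sigma_{impl}(T)$. The one point requiring a priori care is the finiteness of $\sigma_{impl}(T)$ and the convergence $y_T\to 0$, since only then can one legitimately use the small-argument expansion of $\Phi$; both follow immediately from Theorem~\ref{th:callATM} and the monotonicity of $\Phi(y)-\Phi(-y)$. The degenerate case $\sigma_0=0$ is handled by the same argument and produces $\sigma_{impl}(T)\to 0$, consistent with the statement.
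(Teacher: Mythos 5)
Your proof is correct and follows essentially the same route as the paper: apply Theorem~\ref{th:callATM} at the money, note that $\sigma_{impl}(T)\sqrt{T}\to 0$, and use the small-argument expansion $\Phi(y)-\Phi(-y)\sim\sqrt{2/\pi}\,y$ to invert~\eqref{eq:implied}. You merely spell out in a bit more detail why $y_T\to 0$ (via monotonicity of $y\mapsto\Phi(y)-\Phi(-y)$) and the degenerate case $\sigma_0=0$, both of which the paper leaves implicit.
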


\begin{proof}
  By
  the asymptotic properties of $\Phi$ (see, e.g., Abramowitz and Stegun~\cite[Chapter 7]{AbramowitzStegun.64}), we have that $\Phi(x)-\Phi(-x) \sim \sqrt{2/\pi}\,x$ for small $x$. Since $\sigma_{impl}(T)\sqrt{T} \downarrow 0$ by Theorem~\ref{th:callATM},
  we obtain that
  \[
    \Phi\left(\frac{1}{2}\sigma_{impl}(T)\sqrt{T}\right)-\Phi\left(-\frac{1}{2}\sigma_{impl}(T)\sqrt{T}\right)
     \sim \frac{\sigma_{impl}(T)}{\sqrt{2\pi}} \sqrt{T} .
  \]
  As Theorem~\ref{th:callATM} yields $E[(S_T-S_0)^+]/S_0 \sim \frac{|\sigma_0|/S_0}{\sqrt{2\pi}}\sqrt{T}$, the definition of
  $\sigma_{impl}(T)$ shows that $\sigma_{impl}(T) \to |\sigma_0|/S_0$.
\end{proof}

\begin{proof}[Proof of Theorem~\ref{th:callATM}]
  We may use the put-call parity to rewrite the call price in terms of an absolute moment: since $S$ is a martingale,
  \begin{align*}
    E[|S_T-K_T|]
    &=E[2 (S_T-K_T)^+ -S_T+K_T]\\
    &=2E[(S_T-K_T)^+]-S_0+K_T.
  \end{align*}
  From this we also see that we may assume $K_T=S_0+ \theta \sqrt{T}$ (i.e., that the $o\big(\sqrt{T}\big)$ part does not matter).
  By a translation we may also assume that $S_0=K_0=0$. (We exploit here that we are working with a general class of martingales $S$, not necessarily positive.)

  \vspace{.5em}\noindent\emph{Step 1: Continuous L\'evy case.} Assume first that $S$ is a continuous L\'evy process, i.e.,
  that $S_t=\sigma_0 W_t$. Then
  \[
    E[(S_T-\theta\sqrt{T})^+]=\sqrt{T}E[(\sigma_0W_1-\theta)^+]=\sqrt{T}E[\cN(-\theta,\sigma_0^2)^+].
  \]

  \vspace{.5em}\noindent\emph{Step 2: General L\'evy case.}
  Let $S$ be a L\'evy process (i.e., $S=Z$) and denote by $S=S^c+S^d$ its decomposition into the continuous and purely discontinuous martingale parts. In view of
  \[
    |S_T^c-K_T|-|S_T^d| \leq |S_T-K_T|\leq |S_T^c-K_T|+|S_T^d|
  \]
  and Step~1, it suffices to show that $E[|S^d_T|]$ is of order $o\big(\sqrt{T}\big)$. To relax the notation, let us assume that $S^d=S$.
  We can further decompose $S$ into  a martingale with bounded jumps, which is in particular square-integrable, and a compound Poisson process $X$ which is integrable due to~\eqref{eq:integrabilityZ}. One can check by direct calculation or by an application of Theorem~\ref{th:finitevariation} below that $E[|X_T|]=O(T)$. That is, we may even assume that $S$ is a square-integrable pure-jump L\'evy martingale. Then
  \[
    |S_T|/\sqrt{T} \to 0 \quad\mbox{in probability as }T\downarrow0;
  \]
  see, e.g.,~\cite[Lemma~4.1]{Jacod.07}. To conclude that $E[|S_T|]/\sqrt{T}\to 0$, it suffices to show the uniform integrability of $\{S_T/\sqrt{T}\}_{T>0}$. But
  this set is even bounded in $L^2(P)$ as $S$ is square-integrable; indeed, $\|S_T/\sqrt{T}\|^2_{L^2(P)}= E\big[[S,S]_1\big]$ due to the relation $E\big[S_T^2\big]= E\big[[S,S]_T\big]=T E\big[[S,S]_1\big]$.

  \vspace{.5em}\noindent\emph{Step 3: General case.}
  When $S$ is as in the theorem, we approximate $S$ by the L\'evy process
  $Z$ from~\eqref{eq:LevyApproxBody}.
  The assumptions of Proposition~\ref{pr:approx}(i) are satisfied for $S'=Z$ and the order $o(1)$, hence we obtain that
  $E[|S_T-Z_T|]$ is of order $o\big(\sqrt{T}\big)$. In view of Step~2 applied to $Z$ and
  \[
    E[|Z_T-K_T|] - E[|S_T-Z_T|] \leq E[|S_T-K_T|] \leq E[|Z_T-K_T|] + E[|S_T-Z_T|],
  \]
  this completes the proof.
\end{proof}

\section{Option Prices with Higher Leading Orders}\label{se:discontinuous}

In this section, we consider \emph{pure-jump} martingales
\begin{equation}\label{eq:purejump}
  S=S_0+\kappa(x)*(N-F(dx)dt),
\end{equation}
i.e., we set $\sigma\equiv0$ in~\eqref{eq:canonRepresentationIto}. Then the term of order $\sqrt{T}$ in Theorem~\ref{th:callATM} vanishes and the leading order is higher than $1/2$. We shall again use the approximation result from Section~\ref{se:approx} to reduce to the L\'evy case and consider
\begin{equation}\label{eq:approximation}
  Z=S_0+\kappa_0(x) * (N-F(dx)dt).
\end{equation}
However, this case is now more involved since the results depend on the properties of the L\'evy measure.
We recall the standing assumption~\eqref{eq:integrabilityZ} which ensures that $Z$ is well defined and integrable.

\subsection{Finite Variation}

We first treat the case when $S$ is of finite variation, which leads to the highest possible (nontrivial) convergence rate for the at-the-money option price.  Indeed, the following result shows that this class of price processes is characterized by the rate $O(T)$.

\begin{theorem}\label{th:finitevariation}
  Let $S$ be a pure-jump martingale of the form~\eqref{eq:purejump} and
  assume that $\lim_{t\downarrow 0} E\big[\int_{\mathbb{R}} |\kappa_t(x)-\kappa_0(x)|F(dx)\big]=0$. Then
  the following are equivalent:
  \begin{enumerate}[topsep=3pt, partopsep=0pt, itemsep=1pt,parsep=2pt]
    \item $S$ is of finite variation on $[0,T]$ for some $T>0$,
    \item $E[(S_T-S_0)^+]=O(T)$ as $T \downarrow 0$.
  \end{enumerate}
  In that case,
  \[
    E[(S_T-S_0)^+]=\frac{1}{2}C\,T+o(T) \quad \mbox{as } T \downarrow 0,
  \]
  where $C:=\int_\R |\kappa_0(x)|\,F(dx) + \big|\int_\R \kappa_0(x)\,F(dx)\big|$.
\end{theorem}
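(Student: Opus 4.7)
My overall strategy is to reduce to the L\'evy case using the approximation result of Section~\ref{se:approx}, and then analyze $Z$ from~\eqref{eq:approximation} directly. Since the hypothesis $\lim_{t\downarrow 0} E\big[\int_\R |\kappa_t - \kappa_0| F(dx)\big] = 0$ matches Proposition~\ref{pr:approx}(ii) with $\beta = 1$ and $\gamma = 0$ in its $o(\cdot)$ refinement, I immediately obtain $E[|S_T - Z_T|] = o(T)$. Put--call parity together with the martingale property of $S$ gives $E[(S_T - S_0)^+] = \tfrac12 E[|S_T - S_0|]$, and likewise for $Z$, so it is enough to study the centered absolute first moments of $Z$ and transport the result to $S$. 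A preliminary observation, which validates the finite-variation dichotomy on the $S$-side, is that $Z$ is of finite variation iff $\int_\R |\kappa_0(x)| F(dx) < \infty$ (standard L\'evy-process characterization $\int (|y|\wedge 1)\,G(dy) < \infty$ combined with~\eqref{eq:integrabilityZ}, which forces $\int_{|\kappa_0|>1} |\kappa_0|\,F(dx) < \infty$); moreover the inequality $\big|\,|\kappa_t|\wedge 1 - |\kappa_0|\wedge 1\,\big| \leq |\kappa_t - \kappa_0|$ and the hypothesis show that $\int_0^T \int_\R (|\kappa_s|\wedge 1)\, F(dx)\,ds$ is a.s.\ finite for small $T$ iff $\int_\R |\kappa_0|\,F(dx) < \infty$, so $S$ and $Z$ are simultaneously of finite or of infinite variation near $t = 0$.

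For the implication from~(i) to the full $\tfrac12 CT$-asymptotics (which subsumes~(ii)), assume $\mu := \int_\R \kappa_0(x)\,F(dx) \in \R$ is well defined, so that $Z_T - S_0 = Y_T - T\mu$ with $Y_T := \sum_{s\leq T} \kappa_0(\Delta N_s)$ an absolutely convergent pathwise sum of jumps. My goal is to show $E[|Y_T - T\mu|] = CT + o(T)$ with $C = \int_\R |\kappa_0|\,F(dx) + |\mu|$ by a truncation argument: for $\eps>0$, split $Y = Y^\eps + R^\eps$ into the independent contributions from jumps with $|x|>\eps$ and $|x|\leq\eps$, with respective means $T\mu^\eps$ and $T\tilde\mu^\eps$. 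The big-jump part $Y^\eps$ is compound Poisson with intensity $\lambda_\eps = F(\{|x|>\eps\}) < \infty$, and conditioning on $N^\eps_T \sim \mathrm{Pois}(\lambda_\eps T)$ yields
\[
  E[|Y^\eps_T - T\mu^\eps|] = T|\mu^\eps|\, P(N^\eps_T = 0) + T\int_{|x|>\eps} |\kappa_0(x)|\, F(dx) + o(T),
\]
where the second term comes from the $\{N^\eps_T = 1\}$ event (dominated convergence in $T\mu^\eps\to 0$) and the $\{N^\eps_T \geq 2\}$ contribution is $O(T^2)$. The small-jump part is controlled by the crude bound $E[|R^\eps_T - T\tilde\mu^\eps|] \leq 2T\int_{|x|\leq \eps} |\kappa_0|\,F(dx)$. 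Sandwiching $|Y_T - T\mu|$ between $|Y^\eps_T - T\mu^\eps| \pm |R^\eps_T - T\tilde\mu^\eps|$ and letting $\eps\downarrow 0$ gives $E[|Y_T - T\mu|] = CT + o(T)$; the bound $E[|S_T - Z_T|] = o(T)$ then transports the asymptotics to $S$.

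For the converse I would prove the contrapositive: if $\int_\R |\kappa_0|\,F(dx) = \infty$, i.e., $Z$ has infinite variation, then $E[|Z_T - S_0|]/T \to \infty$, so $E[(S_T - S_0)^+]/T \to \infty$ by the approximation bound, contradicting~(ii). This is the step I expect to be the main obstacle. My plan is to use Fourier inversion: from $|x| = \tfrac{2}{\pi}\int_0^\infty (1 - \cos ux)/u^2\,du$ and Fubini one obtains $E[|X|] = \tfrac{2}{\pi}\int_0^\infty (1 - \Re \phi_X(u))/u^2\,du$ for any $X\in L^1$. Applied to $X = Z_T - S_0$ this gives
\[
  \frac{E[|Z_T - S_0|]}{T} = \frac{2}{\pi}\int_0^\infty \frac{1 - \Re e^{T\psi(u)}}{Tu^2}\,du,
\]
where $\psi(u) = \int (e^{iuy} - 1 - iuy)\,G(dy)$ is the characteristic exponent of the martingale $Z$ and $G$ denotes the image of $F$ under $\kappa_0$ (the integral defining $\psi$ converges because $\int (y^2\wedge |y|)\,G(dy) < \infty$ by~\eqref{eq:integrabilityZ}). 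The integrand is nonnegative, since $\Re e^{T\psi(u)} \leq |e^{T\psi(u)}| = e^{T\Re\psi(u)} \leq 1$, and converges pointwise to $-\Re\psi(u)/u^2 = \int(1-\cos uy)\,G(dy)/u^2$ as $T\downarrow 0$. Fatou's lemma, Fubini and the classical identity $\tfrac{2}{\pi}\int_0^\infty (1-\cos uy)/u^2\,du = |y|$ then give
\[
  \liminf_{T\downarrow 0}\frac{E[|Z_T - S_0|]}{T} \geq \int |y|\,G(dy) = \int_\R |\kappa_0(x)|\,F(dx) = +\infty,
\]
as required. Combining this with the finite-variation asymptotics via the approximation bound $E[|S_T - Z_T|] = o(T)$ completes the proof.
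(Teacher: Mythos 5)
Your proposal is correct in substance but follows a genuinely different route through the L\'evy case. The reduction to $Z$ via Proposition~\ref{pr:approx}(ii) with $\beta=1$, $\gamma=0$ and put--call parity is exactly the paper's Step~2. For the L\'evy process itself, however, the paper leans on external results: the lower bound $\liminf_T T^{-1}E[|Z_T|]\geq\int|\kappa_0|\,F(dx)$ comes from Sato's small-time limit $T^{-1}E[f_n(Z_T)]\to\int f_n(\kappa_0(x))F(dx)$ for functions vanishing near the origin, and the exact constant $C$ is extracted from Jacod's power-variation theorem together with the identity $E[\Var(Z)_1]=C$. You replace the first by a Fourier/Fatou argument based on $E[|X|]=\tfrac{2}{\pi}\int_0^\infty(1-\mathrm{Re}\,\phi_X(u))u^{-2}\,du$, and the second by a bare-hands compound-Poisson conditioning on $\{N^\eps_T=0\},\{N^\eps_T=1\},\{N^\eps_T\geq2\}$ plus a truncation sandwich. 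Both of your arguments are sound (the Fatou step is legitimate since $1-\mathrm{Re}\,e^{T\psi(u)}\geq 0$, and the $\eps$-dependent $o(T)$ terms are controlled by the uniform-in-$T$ bound $2T\int_{\{\text{small jumps}\}}|\kappa_0|\,F(dx)$ before letting $\eps\downarrow0$). What your route buys is self-containedness --- no appeal to \cite{Jacod.07} or \cite{Sato.99} --- and the compound-Poisson computation makes visible why the drift contributes the extra $|\int\kappa_0\,F(dx)|$ to $C$ (it is the $\{N^\eps_T=0\}$ event); what the paper's route buys is brevity and the additional observation that $E[|Z_T|]$, $E[\sup_{t\le T}|Z_t|]$ and $E[\Var(Z)_T]$ share the same leading term.

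Two points deserve repair. First, your truncation ``jumps with $|x|>\eps$'' is stated in the $x$-variable of the driving measure $F$; since $F$ is only $\sigma$-finite with $F(\R)=\infty$ and is otherwise arbitrary (cf.\ Remark~\ref{re:representation}), $F(\{|x|>\eps\})$ need not be finite, so $Y^\eps$ need not be compound Poisson. The split must be by the magnitude of the jump of $Z$, i.e.\ on $\{|\kappa_0(x)|>\eps\}$, where $F(\{|\kappa_0|>\eps\})\leq\eps^{-1}\int|\kappa_0|\,F(dx)<\infty$; equivalently, work with the image measure $G=F\circ\kappa_0^{-1}$ throughout. Second, your preliminary equivalence between (i) for $S$ and $\int|\kappa_0|\,F(dx)<\infty$ is stated rather quickly: passing from ``$S$ has finite variation'' to a.s.\ finiteness of $\int_0^T\int(|\kappa_t|\wedge1)F(dx)\,dt$ and then to the deterministic statement about $\kappa_0$ requires picking a time $t$ at which both $\int(|\kappa_t|\wedge1)F(dx)<\infty$ on a set of positive probability and $\int|\kappa_t-\kappa_0|F(dx)<\infty$; the paper handles the analogous issue with a localizing stopping time and \cite[II.1.33b]{JacodShiryaev.03}, and some such care is needed in your write-up as well. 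Neither issue is fatal, but both should be fixed.
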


\begin{proof}
  As in the proof of Theorem~\ref{th:callATM} we may assume that $S_0=0$ and then we have $E[|S_T|]=2E[S_T^+]$.
  Let us first clarify the meaning of~(i) under the given conditions.
  The assumed convergence implies in particular that
  \begin{equation}\label{eq:proofFiniteT}
    E\bigg[\int_0^T\int_\R |\kappa_t(x)-\kappa_0(x)|\,F(dx)dt\bigg]<\infty\quad\mbox{for some }T>0.
  \end{equation}
  Due to this fact, the following are actually equivalent to (i):
  \begin{enumerate}[topsep=3pt, partopsep=0pt, itemsep=1pt,parsep=2pt]\renewcommand{\labelenumi}{(\alph{enumi})}
    \item $Z$ is of integrable variation,
    \item $S$ is of integrable variation on $[0,T]$ for some $T>0$.
  \end{enumerate}
  Indeed, assume (i). Then $S=S^1-S^2$ on $[0,T]$ for two increasing processes $S^1$ and $S^2$ with $S^1_0=S^2_0=0$. Since the jumps of the martingale $S$ are integrable, the \emph{positive} stopping time $\tau:=\inf\{0\leq t \leq T:\, S^1_t+S^2_t\geq1\}$ is such that $S$ is of integrable variation on $[\![0,\tau]\!]$. By \cite[II.1.33b]{JacodShiryaev.03} this implies that $E\big[\int_0^{\tau}\int_\R |\kappa_t(x)|\,F(dx)dt\big]<\infty$. Now
  we can use that the product $\tau\int_\R |\kappa_0(x)|\,F(dx)=  \int_0^{\tau}\int_\R |\kappa_0(x)|\,F(dx)dt$ is bounded by
  \[
    \int_0^{\tau}\int_\R |\kappa_t(x)-\kappa_0(x)|\,F(dx)dt +\int_0^{\tau}\int_\R |\kappa_t(x)|\,F(dx)dt<\infty\quad P\mbox{-a.s.}
  \]
  to conclude via Lemma~\ref{le:LevyFacts}(vii) that $Z$ is of integrable variation.
  Furthermore, (a) together with~\eqref{eq:proofFiniteT} yields
  $E\big[\int_0^T\int_\R |\kappa_t(x)|\,F(dx)dt\big]<\infty$, which is (b) by~\cite[II.1.33b]{JacodShiryaev.03}, and clearly (b) implies (i).

  \vspace{.5em}\noindent\emph{Step 1: L\'evy case}. We first assume that $S=Z$. Moreover, we replace (i) by the equivalent
  condition~(a).

  \noindent \emph{(ii) implies~(a):}
  We consider an increasing  sequence of continuous functions $f_n$ on $\R$ satisfying
  \[
    0\leq f_n(x)\leq |x|\wedge n,\quad f_n(x) = 0 \mbox{ for } |x|<1/n,\quad \lim_n f_n(x)=|x|
  \]
  for all $x\in\R$ and $n\geq1$. For each $n$ we have
  \[
    \liminf_{T\downarrow 0} \frac{1}{T} E[|Z_T|] \geq \liminf_{T\downarrow 0} \frac{1}{T} E[f_n(Z_T)]
    = \int_{\R} f_n(\kappa_0(x))\,F(dx)
  \]
  where the equality follows from Sato~\cite[Corollary 8.9]{Sato.99} since $f_n$ vanishes in a neighborhood of the origin (this holds for any L\'evy measure).
  By monotone convergence as $n\to\infty$ we obtain
  \[
    \liminf_{T\downarrow 0} \frac{1}{T} E[|Z_T|] \geq \int |\kappa_0(x)|\,F(dx).
  \]
  The left hand side is finite by assumption, hence $Z$ is of integrable variation.

  \noindent \emph{(a) implies~(ii):} If $Z$ is of integrable variation, its total variation process $\mathrm{Var}(Z)_t:=\int_0^t|dZ_s|$ is an integrable L\'evy subordinator and by Lemma~\ref{le:LevyFacts}(vii)
  \[
    \frac{1}{T} E[\mathrm{Var}(Z)_T]=E[\mathrm{Var}(Z)_1] =\int |\kappa_0(x)|\,F(dx) + \bigg|\int \kappa_0(x)\,F(dx)\bigg|= C.
  \]
  Since $|Z_T|\leq \Var(Z)_T$, we conclude that
  \[
    \limsup_{T\downarrow 0} \frac{1}{T} E[|Z_T|] \leq C.
  \]
  For the converse inequality we consider the function $g_n(x)=|x|\wedge n$ for $n\geq1$. Since $Z$ is of finite variation, we obtain for each $n$ that
  \[
    \liminf_{T\downarrow 0} \frac{1}{T} E[|Z_T|] \geq  \lim_{T\downarrow 0} \frac{1}{T} E[g_n(Z_T)] = \int g_n(\kappa_0(x))\,F(dx) + \bigg|\int \kappa_0(x)\,F(dx)\bigg|
  \]
  as a consequence of~\cite[Theorem~2.1(i)(c)]{Jacod.07} since the increments of $Z$ are i.i.d.\
  (see also~\cite[Equation~(5.8)]{Jacod.07}).
  Applying monotone convergence as $n\to\infty$ to the right hand side, we conclude that
  \[
    \liminf_{T\downarrow 0} \frac{1}{T} E[|Z_T|] \geq C.
  \]
  Hence we have proved the claimed convergence rate for the case $S=Z$.

  \vspace{.5em}\noindent\emph{Step 2: General case}.
  Under the stated assumption, Proposition~\ref{pr:approx}(ii) with $\beta=1$ and $\gamma=0$ yields that $E[|S_T-Z_T|]=o(T)$ as $T \downarrow 0$. Hence (ii) holds for $S$ if and only if it holds for $Z$ and so Step~1 yields the equivalence of~(i) and (ii). Moreover, the leading constant $C$ is the same as for $Z$ since the approximation error is of order $o(T)$.
\end{proof}

\begin{remark}
  \begin{enumerate}[topsep=3pt, partopsep=0pt, itemsep=1pt,parsep=2pt]\renewcommand{\labelenumi}{(\alph{enumi})}
  \item For exponential L\'evy processes, where $\kappa(x)=e^x-1$, the formula in (ii) was previously obtained by Tankov~\cite{Tankov.10}. We thank the referee for pointing out this reference.

  \item The constant $C$ has the following feature. For a given absolute moment $\mu:=\int |\kappa_0(x)|\,F(dx)$ of the L\'evy measure of $Z$, the value of $C$ may range from $\mu$ to 2$\mu$ depending on $\int \kappa_0(x)\,F(dx)$. In particular, $C$ is minimal if the jumps are symmetric and maximal if all jumps have the same sign.

  \item As in Corollary~\ref{co:implied1}, it follows from Theorem~\ref{th:finitevariation} that the implied volatility satisfies $\sigma_{impl}(T) \sim \sqrt{\pi/2}\,C/S_0 \; \sqrt{T}$ as $T \downarrow 0$.

  \item The following observation from Step~1 in the proof seems worth being recorded: if $Z$ is
    a L\'evy martingale of finite variation,
    the three functions $T\mapsto E[|Z_T|]$, $T\mapsto E[\sup_{t\leq T} |Z_t|]$ and
    $T\mapsto E[\Var(Z)_T]$ all converge to zero as $T\downarrow 0$ with the same leading term $CT$.
  \end{enumerate}
\end{remark}

\subsection{Infinite Variation}

We now turn to pure-jump processes with infinite variation. By the previous results we know that the leading order for the at-the-money option price has to be strictly between $\sqrt{T}$ and $T$; however, it need not be a power of $T$. The class of possible L\'evy measures at time zero is very rich and it is unclear how to compute the exact order in general. A look at existing financial models suggests to impose additional structure which will pin down an order of parametric form.
The base case is the $\alpha$-stable L\'evy measure which is given by
\[
  \nu(dx)=\frac{g(x)}{|x|^{1+\alpha}}\,dx,\quad g(x):=\beta_-1_{(-\infty,0)}(x)+\beta_+ 1_{(0,\infty)}(x)
\]
for $\alpha\in(0,2)$ and two nonnegative constants $\beta_+$ and $\beta_-$. More precisely, $\nu$ corresponds to a L\'evy process $L_t$ following a stable law with index of stability $\alpha$, skewness parameter $\beta=(\beta_+-\beta_-)/(\beta_+ + \beta_-)$, shift parameter $\mu=0$ and scale parameter $c=(\beta_+ +\beta_{-})^{1/\alpha}t^{1/\alpha}$.
In the nondegenerate case $\beta_++\beta_->0$ the process $L$ has infinite variation if and only if $\alpha\in[1,2)$.
For $\alpha\in(1,2)$ the first moment $E[|L_t|]$ exists whereas for $\alpha\in(0,1]$ this is not the case and in particular $L$ cannot be a martingale. On the other hand,  only the small jumps are relevant for the option price asymptotics of order smaller than $O(T)$, which leads us to the following definition.

\begin{definition}\label{def:alphaLike}
  Let $\alpha_+,\alpha_- \in (0,2)$. A L\'evy process is said to have \emph{$(\alpha_+,\alpha_-)$-stable-like small jumps} if its L\'evy measure $\nu$ is of the form
  \[
    \nu(dx)=\bigg(\frac{f(x)}{|x|^{1+\alpha_-}}1_{(-\infty,0)}(x) + \frac{f(x)}{|x|^{1+\alpha_+}}1_{(0,\infty)}(x)\bigg) \,dx
  \]
  for a Borel function $f\geq0$ whose left and right limits at zero,
  \[
    f_+:=\lim_{x \downarrow 0} f(x)\quad \mbox{and} \quad f_-:=\lim_{x \uparrow 0} f(x),
  \]
  exist and satisfy $f(x)-f_+=O(x)$ as $x \downarrow 0$ and $f(x)-f_{-}=O(x)$ as $x \uparrow 0$.
\end{definition}

This class includes most of the processes used in financial modeling; e.g., the tempered stable and in particular the CGMY processes, of which the variance gamma process is a special case, or the normal inverse Gaussian process. We refer to Cont and Tankov~\cite[Section~4.5]{ContTankov.04} for more information on these models. We observe that if the driving L\'evy process of an SDE as in Corollary~\ref{cor:levysde} is chosen from this class, then the process $Z$ defined in the same corollary is again of the same type; merely the constants $f_+$ and $f_-$ change.

Since $\alpha_+\vee\alpha_-\in(0,1)$ implies that the jumps are of finite variation (cf.~Theorem~\ref{th:finitevariation}), we are interested here only in the case $\alpha_+\vee\alpha_-\in[1,2)$.
In that case we shall see that the larger of the values $\alpha_+$ and $\alpha_-$ determines the leading order.

The statement of our main result requires the following constants. For $f_+,f_-\geq0$ and $\alpha\in(1,2)$ we set $C(\alpha,0,0):=0$ and if $f_+ +f_->0$ then
\begin{align*}
  C(\alpha,f_+,f_{-}):=&\,\frac{2}{\pi}(f_+ + f_{-})^{\frac{1}{\alpha}}\;\Gamma\left(1-\frac{1}{\alpha}\right)\left[1+\left(\frac{f_+ - f_{-}}{f_+ + f_{-}}\right)^2 \tan^2\left(\frac{\alpha \pi}{2}\right)\right]^{\frac{1}{2\alpha}}\\
  & \times  \cos\left(\frac{1}{\alpha}\arctan\left(\frac{f_+ - f_{-}}{f_+ + f_{-}}\tan\left(\frac{\alpha\pi}{2}\right)\right)\right),
\end{align*}
where $\Gamma$ denotes the usual Gamma function. For $\alpha_+,\alpha_-\in(0,2)$ such that $\alpha_+\vee\alpha_-\in(1,2)$ we then define
\[
  C(\alpha_+,\alpha_-,f_+,f_-):=
  \begin{cases}
  C(\alpha_+,f_+,f_{-}), & \alpha_+=\alpha_-\,, \\
  C(\alpha_+,f_+,0), & \alpha_+>\alpha_-\,,\\
  C(\alpha_-,0,f_-), & \alpha_+<\alpha_-\,.
  \end{cases}
\]

\begin{theorem}\label{thm:jumps}
  Let $S$ be a pure-jump martingale of the form~\eqref{eq:purejump} such that the L\'evy process $Z$ from~\eqref{eq:approximation} has $(\alpha_+,\alpha_-)$-stable-like small jumps and let $\alpha:=\alpha_+\vee\alpha_-$. Assume that there exist $\beta \in [1,2]$ and $\gamma\geq0$ such that
  \[
    E\bigg[\int_{\mathbb{R}}|\kappa_t(x)-\kappa_0(x)|^\beta\, F(dx)\bigg]=o(t^\gamma) \quad \mbox{and}\quad \frac{\beta}{1+\gamma}\leq\alpha.
  \]
  \begin{enumerate}[topsep=3pt, partopsep=0pt, itemsep=1pt,parsep=2pt]
    \item If $\alpha\in(1,2)$, then
      \[
        E[(S_T-S_0)^+]= \frac{1}{2}C\big(\alpha_+,\alpha_-,f_+,f_{-}\big)\,T^{1/\alpha}+o(T^{1/\alpha}) \quad \mbox{as } T \downarrow 0.
      \]

    \item If $\alpha_+=\alpha_-=1$ and $f_+=f_-$, then
      \[
        E[(S_T-S_0)^+]= \frac{1}{2}\big(f_+ + f_{-}\big)\,T|\log T| +o(T|\log T|) \quad \mbox{as } T \downarrow 0.
      \]
  \end{enumerate}
\end{theorem}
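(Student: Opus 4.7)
First I would apply the put-call parity argument from the proof of Theorem~\ref{th:callATM} to rewrite $E[(S_T-S_0)^+]=\tfrac12 E[|S_T-S_0|]$ and assume $S_0=0$. Proposition~\ref{pr:approx}(ii) applied with the hypothesised $\beta,\gamma$ then yields $E[|S_T-Z_T|]=o(T^{(1+\gamma)/\beta})$, and the assumption $\beta/(1+\gamma)\le\alpha$ is exactly $(1+\gamma)/\beta\ge 1/\alpha$, so this approximation error is $o(T^{1/\alpha})$ in case~(i) and $o(T)=o(T|\log T|)$ in case~(ii). The problem reduces to computing the small-time asymptotics of $E[|Z_T|]$ for the L\'evy process $Z$.

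The second step is a further reduction, within the class of L\'evy processes, from $Z$ to a ``pure stable-like'' L\'evy martingale $Z^\star$ whose L\'evy measure is exactly $\nu^\star(dx)=\bigl(f_+ x^{-1-\alpha_+}\mathbf{1}_{(0,\delta]}(x)+f_-|x|^{-1-\alpha_-}\mathbf{1}_{[-\delta,0)}(x)\bigr)dx$, with jumps larger than $\delta$ discarded and a drift added to preserve the martingale property. The stable-like hypothesis $f(x)-f_\pm=O(|x|)$ ensures that the density of $\nu-\nu^\star$ is bounded by $C|x|^{-\alpha_\pm}$ near zero, hence integrable against $|x|$; combined with the elimination of the finite-variation large-jump part (contributing $O(T)$ by Theorem~\ref{th:finitevariation}), Proposition~\ref{pr:approx}(ii) with $\beta=1,\gamma=0$ gives $E[|Z_T-Z^\star_T|]=O(T)$, which is negligible in both regimes since $T=o(T^{1/\alpha})$ and $T=o(T|\log T|)$.

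For case~(i), $Z^\star$ is (up to its $O(T)$ truncation drift) an $\alpha$-stable L\'evy process, so its self-similarity yields $E[|Z^\star_T|]=T^{1/\alpha}E[|Z^\star_1|]+O(T)$, and $E[|Z^\star_1|]$ is computed from the stable characteristic function via the Zolotarev representation, producing the explicit constant $C(\alpha_+,\alpha_-,f_+,f_-)$. When $\alpha_+\ne\alpha_-$, the ``subordinate'' tail rescales at order $T^{1/(\alpha_+\wedge\alpha_-)}=o(T^{1/\alpha})$ and is absorbed in the remainder, which is exactly why $C$ splits into the three branches of its definition.

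The principal obstacle is the critical case~(ii), where self-similarity breaks down at $\alpha=1$ and the $|\log T|$ factor emerges from a logarithmic divergence. I would handle it by direct Fourier inversion using the representation $E[|Z^\star_T|]=\tfrac{2}{\pi}\int_0^\infty(1-\Re\phi(u))u^{-2}\,du$; by symmetry the characteristic exponent is $\psi(u)=T(f_++f_-)|u|\,h(|u|\delta)$ with $h(s):=\int_0^s(1-\cos y)y^{-2}dy$ bounded and tending to $\pi/2$ as $s\to\infty$. Splitting the Fourier integral at $u=1/\delta$, the inner piece contributes $O(T)$, while in the outer piece $\psi(u)\sim\tfrac{\pi}{2}T(f_++f_-)|u|$, and the substitution $v=\psi(u)$ extracts the logarithmic divergence $\int_{v_0}^\infty(1-e^{-v})v^{-2}dv\sim|\log v_0|$ at $v_0\sim T$, yielding $E[|Z^\star_T|]=(f_++f_-)T|\log T|+O(T)$ and hence the asserted $\tfrac12(f_++f_-)T|\log T|$ for the option price. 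Carefully justifying the split and tracking constants through the Fourier computation is where most of the delicate work will lie.
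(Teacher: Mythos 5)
Your overall architecture matches the paper's (reduce to the L\'evy process $Z$ via Proposition~\ref{pr:approx}(ii), replace $Z$ by an exact reference process at cost $O(T)$, compute for the reference process), and your Fourier computation in case~(ii) is a genuinely different and valid route: the paper instead takes the symmetric normal inverse Gaussian process as the reference, whose absolute moments $\frac{2\rho}{\pi}e^{\rho T}TK_0(\rho T)\sim\frac{2\rho}{\pi}T|\log T|$ are known in closed form (Barndorff-Nielsen--Stelzer), which avoids all analysis but obscures where the logarithm comes from; your representation $E[|Z^\star_T|]=\frac{2}{\pi}\int_0^\infty(1-\Re\phi(u))u^{-2}\,du$ with the split at $u=1/\delta$ does produce the correct constant $(f_++f_-)T|\log T|$ and is self-contained. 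In case~(i) your appeal to self-similarity plus the Samorodnitsky--Taqqu/Zolotarev moment formula is exactly what the paper does; note only that your claim that the minority tail ``rescales at order $T^{1/(\alpha_+\wedge\alpha_-)}$'' is wrong when $\alpha_+\wedge\alpha_-\le 1$ (the correct orders are $O(T)$, respectively $O(T|\log T|)$ at $\alpha_+\wedge\alpha_-=1$), though the conclusion that it is $o(T^{1/\alpha})$ survives in all cases.

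The one substantive gap is the middle reduction from $Z$ to $Z^\star$. You invoke Proposition~\ref{pr:approx}(ii) with $\beta=1$, $\gamma=0$, but that proposition compares two integrands $\kappa,\kappa'$ against the \emph{same} compensated Poisson random measure $N-F(dx)dt$; here $Z$ and $Z^\star$ have different L\'evy measures $\nu$ and $\nu^\star$, and the signed measure $\nu-\nu^\star$ need not be nonnegative, so $Z$ is not $Z^\star$ plus an independent finite-variation L\'evy process and there is no canonical pair $(\kappa_0,\kappa_0^\star)$ with $\int|\kappa_0-\kappa_0^\star|\,dF<\infty$. To make your step rigorous you must either (a) couple the two processes by writing $\nu=\nu\wedge\nu^\star+(\nu-\nu^\star)^+$ and $\nu^\star=\nu\wedge\nu^\star+(\nu^\star-\nu)^+$, sharing the jumps of intensity $\nu\wedge\nu^\star$ and noting that both remainders have finite first moment (by $f(x)-f_\pm=O(|x|)$), so that the difference is a sum of two integrable-variation martingales contributing $O(T)$; or (b) use the paper's device, Lemma~\ref{lem:approxlevy}, which performs an equivalent change of measure with density process $\cE\bigl((\psi(x)-1)*(\mu-\nu(dx)dt)\bigr)$ and bounds $|E_Q[|L_T|]-E[|L_T|]|$ by $\{(e^{T\langle Y,Y\rangle_1}-1)\,T\langle L,L\rangle_1\}^{1/2}=O(T)$ using only $\int_{-\delta}^{\delta}|\psi(x)-1|^2\,\nu(dx)<\infty$. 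Either repair works and yields the same $O(T)$ error, so the idea behind your step is sound; as written, however, the cited tool does not apply.
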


\begin{remark}\label{rk:afterJumpsThm}
  \begin{enumerate}[topsep=3pt, partopsep=0pt, itemsep=1pt,parsep=2pt]\renewcommand{\labelenumi}{(\alph{enumi})}
  \item For exponential L\'evy processes, a result similar to part (i) is obtained in~\cite{Tankov.10}. There, the condition on the jumps in formulated in a slightly different way and hence our theorem is not a strict generalization. More specifically, in \cite{Tankov.10}, the author deals with L\'evy processes whose characteristic function resembles the one of a stable process. On the other hand, we consider processes whose jump measures
resemble the jump measure of a stable process around the origin, which seems more convenient beyond L\'evy models. For a stable process, one readily verifies that the two expressions for the small-time limit indeed coincide, once the relationship between the characteristic function and the jump measure (cf., e.g., \cite[Chapter I.1]{SamorodnitskyTaqqu.94}) is taken into account.

 The special case of a CGMY model is also treated in~\cite{FordeFigueroaLopez.10}.

  \item In part (i), the continuity assumption on $\kappa_t(x)$ is satisfied in particular if $E[\int_{\mathbb{R}}|\kappa_t(x)-\kappa_0(x)|^2F(dx)]=O(t)$, as $\alpha>1$. For example, this holds in the setting of the L\'evy-driven SDE of Corollary~\ref{cor:levysde} (see the proof of that result).

  \item In the limit $\alpha\uparrow2$, which corresponds to the Brownian case, we obtain the order $\sqrt{T}$ as in Theorem~\ref{th:callATM}. On the other hand, the limit $\alpha\downarrow1$ in part~(i) does not yield the order obtained in part~(ii) and the leading constants explode since $\lim_{a\downarrow 0}\Gamma(a)=+\infty$.

  \item The theorem still holds true if the regularity of $f$ in Definition~\ref{def:alphaLike} is weakened as follows: Instead of
  $f(x)-f_\pm=O(|x|)$, it is sufficient to have $f(x)-f_\pm=O(|x|^\varrho)$ for some $\varrho>\alpha/2$. The proof is identical.

  \item As in Corollary~\ref{co:implied1} we can deduce that the implied volatility satisfies
  $\sigma_{impl}(T) \sim \sqrt{\pi/2}\,C(\alpha_+,\alpha_{-},f_+,f_{-})/S_0\; T^{1/\alpha - 1/2}$ in the setting of
  part (i) and $\sigma_{impl}(T) \sim \sqrt{\pi/2}\,(f_++f_{-})/S_0\; \sqrt{T}|\log T|$ in the setting of part~(ii)
   of Theorem~\ref{thm:jumps}.

  \item The following consequence of a result due to Luschgy and Pag\`es (see \cite[Theorem~3]{LuschgyPages.08}) complements part~(ii): If $S$ is a $(1,1)$-stable-like L\'evy process, possibly with $f_+\neq f_-$, then $E[(S_T-S_0)^+]=O(T|\log T|)$ still holds. However, their method only yields an upper bound and not that the leading order is indeed $\sim CT|\log T|$. As in the proof below we can infer that the same bound holds if $S$ is not a L\'evy process but satisfies the assumptions of part (ii) excluding
      $f_+= f_-$.
  \end{enumerate}
\end{remark}

\subsubsection{Proof of Theorem~\ref{thm:jumps}}

The plan for the proof of Theorem~\ref{thm:jumps} is as follows. We shall again reduce from the general martingale to the L\'evy case by Proposition~\ref{pr:approx}. Since the first absolute moment is known for stable processes, the main step for part~(i) will be to estimate the error made when replacing a stable-like L\'evy process by a true stable process with index $\alpha=\alpha_+\vee\alpha_->1$. For part (ii) the situation is slightly different as the $1$-stable process fails to have a first moment; in this case we shall instead use the normal inverse Gaussian as a reference process.

In a first step we show more generally that we can pass with an error of order $O(T)$ from one L\'evy process to another one when the small jumps have a similar behavior.

\begin{lemma}\label{lem:approxlevy}
  Let $L$ and $L'$ be pure-jump L\'evy martingales with L\'evy measures $\nu$ and $\nu'$, respectively.
  Suppose that for some $\delta>0$, the Radon-Nikodym derivative
  \[
    \psi(x)=\frac{d\big(\nu'(dx)|_{[-\delta,\delta]}\big)}{d\big(\nu(dx)|_{[-\delta,\delta]}\big)}
  \]
  of the measures restricted to $[-\delta,\delta]$ exists and that $\int_{-\delta}^\delta |\psi(x)-1|^2\, \nu(dx)<\infty$. Then
  $E[|L_T|]=E[|L'_T|]+O(T)$ as $T\downarrow 0$.
\end{lemma}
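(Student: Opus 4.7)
The strategy is to realise $L$ and $L'$ on a common probability space so that their difference is a L\'evy martingale of \emph{integrable variation}; the $O(T)$ bound will then follow since
\[
  \bigl|E[|L_T|]-E[|L'_T|]\bigr| \leq E[|L_T-L'_T|],
\]
and the right-hand side enjoys an $O(T)$ bound by the same argument as in Step~1 of the proof of Theorem~\ref{th:finitevariation}. Note that we are free to choose any coupling since $E[|L_T|]$ and $E[|L'_T|]$ depend only on the respective laws.

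First I would decompose each process into small and large jumps, writing $L=U+V$ and $L'=U'+V'$, where $U,U'$ carry the (compensated) jumps in $[-\delta,\delta]$ and $V,V'$ the complementary ones. Because $L$ and $L'$ are martingales, $\int_{|x|>\delta}|x|\,\nu(dx)<\infty$ and the analogous bound holds for $\nu'$, so $V$ and $V'$ are compensated compound Poisson martingales of integrable variation, whence $E[|V_T|]=O(T)$ and $E[|V'_T|]=O(T)$ by Lemma~\ref{le:LevyFacts}(vii). Since $\bigl||L_T|-|U_T|\bigr|\leq|V_T|$ and $\bigl||L'_T|-|U'_T|\bigr|\leq|V'_T|$, matters reduce to proving $\bigl|E[|U_T|]-E[|U'_T|]\bigr|=O(T)$.

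To couple $U$ and $U'$, set $\tilde\mu:=\nu\wedge\nu'$ on $[-\delta,\delta]$ and take three independent Poisson random measures $N^{(1)},N^{(2)},N^{(3)}$ on $[0,\infty)\times[-\delta,\delta]$ with intensities $dt\otimes\tilde\mu$, $dt\otimes(\nu-\tilde\mu)$ and $dt\otimes(\nu'-\tilde\mu)$, respectively. Realising $U$ as the compensated integral of $x$ against $N^{(1)}+N^{(2)}$ and $U'$ likewise against $N^{(1)}+N^{(3)}$ gives the correct marginal laws and makes $\Delta:=U-U'$ a pure-jump L\'evy martingale whose jumps come only from the independent measures $N^{(2)}$ (with sign $+$) and $N^{(3)}$ (with sign $-$). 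Its L\'evy measure $\mu_\Delta$ therefore satisfies
\[
  \int_\R |y|\,\mu_\Delta(dy) \;=\; \int_{[-\delta,\delta]}|x|\,|\nu-\nu'|(dx) \;=\; \int_{[-\delta,\delta]}|x|\,|\psi(x)-1|\,\nu(dx),
\]
which Cauchy-Schwarz bounds by $\bigl(\int_{[-\delta,\delta]}x^2\,\nu(dx)\bigr)^{1/2}\bigl(\int_{[-\delta,\delta]}|\psi-1|^2\,\nu(dx)\bigr)^{1/2}$. This is finite by the standing hypothesis together with the automatic bound $\int_{[-\delta,\delta]}x^2\,\nu(dx)<\infty$ valid for any L\'evy measure. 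Hence $\Delta$ is a L\'evy martingale of integrable variation, and Step~1 of the proof of Theorem~\ref{th:finitevariation} (equivalently Lemma~\ref{le:LevyFacts}(vii) applied to the subordinator $\mathrm{Var}(\Delta)$) yields $E[|\Delta_T|]\leq E[\mathrm{Var}(\Delta)_T]=O(T)$.

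The main obstacle is that the naive $L^2$-isometry for the stochastic integral defining $\Delta$ gives only $E[|\Delta_T|]=O(\sqrt T)$, which is not sufficient. The point of the lemma is that pairing $|\psi-1|$ against $|x|$ by Cauchy-Schwarz in $L^2(\nu|_{[-\delta,\delta]})$---where $|x|$ automatically lies---upgrades the hypothesised $L^2$-control on $\psi-1$ into an $L^1$-control on the L\'evy measure of the coupled difference, thereby reducing matters to the already settled finite-variation L\'evy case.
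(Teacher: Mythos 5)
Your proof is correct, but it takes a genuinely different route from the paper's. The paper keeps a single process $L$ and performs an equivalent change of measure: it sets $Y=(\psi(x)-1)*(\mu-\nu(dx)dt)$, uses $D=\cE(Y)$ as a Girsanov density so that under $Q$ the (truncated) process $L$ has L\'evy measure $\nu'|_{[-\delta,\delta]}$ plus an $O(T)$ drift, and then bounds $|E_Q[|L_T|]-E[|L_T|]|\leq \{E[(D_T-1)^2]E[L_T^2]\}^{1/2}=O(T)$ via Lemma~\ref{le:LevyFacts}(vi),(x); the $L^2$ hypothesis on $\psi-1$ is what makes $Y$ and $\cE(Y)$ square-integrable. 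You instead build a maximal coupling of the small-jump parts through three independent Poisson random measures with intensities $\nu\wedge\nu'$, $(\nu-\nu')^+$ and $(\nu'-\nu)^+$ on $[-\delta,\delta]$, so that the difference $\Delta=U-U'$ is itself a L\'evy martingale with L\'evy measure of total first moment $\int_{[-\delta,\delta]}|x|\,|\psi(x)-1|\,\nu(dx)$, finite by Cauchy--Schwarz, and then you invoke the finite-variation rate from Theorem~\ref{th:finitevariation} (or Lemma~\ref{le:LevyFacts}(vii)). Both arguments reduce the large jumps to Theorem~\ref{th:finitevariation} in the same way, and both ultimately spend the $L^2$ control on $\psi-1$ through a Cauchy--Schwarz step; but your coupling is more elementary (no measure change, no stochastic exponential), yields the stronger pathwise estimate $E[|L_T-L'_T|]=O(T)$ for the coupled versions, and in fact only uses the weaker hypothesis $\int_{-\delta}^{\delta}|x|\,|\psi(x)-1|\,\nu(dx)<\infty$, so it proves a slightly more general statement. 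The paper's Girsanov route, on the other hand, avoids any enlargement or re-construction of the probability space and reuses machinery (stochastic exponentials, Lemma~\ref{le:LevyFacts}(x)) already set up in the appendix.
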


\begin{proof}
  Let $L=x*(\mu-\nu(dx)dt)$ be the canonical representation of $L$; here $\mu$ is the random measure associated with the jumps of $L$. We decompose $L$ into $L=L^{\leq\delta} + L^{> \delta}$,
  where the L\'evy process
  \[
    L^{\leq\delta}:=x1_{\{|x|\leq\delta\}}*(\mu-\nu(dx)dt)
  \]
  is obtained by truncating the jumps at magnitude $\delta$. Then $L^{> \delta}=L-L^{\leq\delta}$ is of finite variation and Theorem~\ref{th:finitevariation} yields
  \begin{equation*}%
    E[|L^{> \delta}_T|]=O(T) \quad \mbox{as } T \downarrow 0.
  \end{equation*}
  Hence we may assume that $L=L^{\leq\delta}$, i.e., that the jumps are bounded by $\delta$ in absolute value, or equivalently that $\nu$ is concentrated on $[-\delta,\delta]$.
  The integrability assumption on $\psi$ ensures that $Y:=(\psi(x)-1)*(\mu-\nu(dx)dt)$ is a square-integrable L\'evy martingale; moreover, $\psi\geq0$ implies that $\Delta Y \geq -1$.
  Hence the stochastic exponential
  \[
    D:=\cE(Y)=\cE\Big((\psi(x)-1)*(\mu-\nu(dx)dt)\Big)
  \]
  is a nonnegative square-integrable martingale (cf.~Lemma~\ref{le:LevyFacts}(x)). We define the probability measure $Q \ll P$ on $\cF_1$ by $dQ/dP=D_1$. Then the Girsanov-Jacod-M\'emin theorem (cf.~\cite[III.3.24]{JacodShiryaev.03}) yields that under $Q$, the process $(L_t)_{0 \leq t \leq 1}$ is L\'evy with triplet $(b_Q,0,\nu_Q)$ relative to the truncation function $h(x)=x$, where $\nu_Q(dx)=\psi(x)\,\nu(dx)=1_{[-\delta,\delta]}\nu'(dx)$ and
  \[
    b_Q=\int_{-\delta}^\delta x(\psi(x)-1)\, \nu(dx);
  \]
  this integral is finite due to the assumption on $\psi$ and H\"older's inequality.
  After subtracting the linear drift which is of order $O(T)$, the $Q$-distribution of $L$ therefore coincides with the $P$-distribution of $L'^{\leq\delta}$, which is obtained from $L'$ by truncating the jumps. As before we may assume that $L'=L'^{\leq\delta}$. To summarize, we have
  \begin{equation*}%
  E_Q[|L_T|]=E[|L'_T|]+O(T)
  \end{equation*}
  and hence it suffices to show that $E_Q[|L_T|]-E[|L_T|]=O(T)$.
  Indeed, $E_Q[|L_T|]=E[D_T|L_T|]$, H\"older's inequality and Lemma~\ref{le:LevyFacts}(vi),(x) yield%
  \begin{align*}
    \big|E_Q[|L_T|]-E[|L_T|]\big|
    &\leq E[|D_T-1||L_T|]\\
    &\leq \big\{E[(D_T-1)^2]\,E[L_T^2]\big\}^{1/2}\\
    &= \big\{\big(E[D_T^2]-1\big)\,E[L_T^2]\big\}^{1/2}\\
    &= \big\{\big(e^{T\br{Y,Y}_1}-1\big)\,T \br{L,L}_1\big\}^{1/2} =O(T),
  \end{align*}
  where we have used that both $Y$ and $L=L^{\leq\delta}$ are square-integrable.
\end{proof}

\begin{proof}[Proof of Theorem~\ref{thm:jumps}(i)]
  As in the proof of Theorem~\ref{th:callATM} we may assume that $S_0=0$ and then we have $E[|S_T|]=2E[S_T^+]$.

  \vspace{.5em}\noindent\emph{Step 1: $\alpha$-stable L\'evy case}. We first note the result for $Z$ and under the additional assumption that $f(x)=f_{-}1_{(-\infty,0)}(x)+f_+ 1_{(0,\infty)}(x)$ and that $\alpha_+=\alpha_-=:\alpha$. Then $Z$ is a centered $\alpha$-stable L\'evy motion with $\alpha \in (1,2)$ and in this case it is known that
  \begin{equation*}%
  E[|Z_T|]=C(\alpha,f_+,f_{-})\,T^{1/\alpha};
  \end{equation*}
  see Samorodnitsky and Taqqu~\cite[Property 1.2.17]{SamorodnitskyTaqqu.94}.%

  \vspace{.5em}\noindent\emph{Step 2: Stable-like L\'evy case with $\alpha_+=\alpha_-$}. We again consider $Z$ for the special case $\alpha_+=\alpha_-=:\alpha$ but now let $f$ be an arbitrary function satisfying  $f(x)-f_+=O(x)$ as $x \uparrow 0$ and $f(x)-f_-=O(x)$ as $x \downarrow 0$.
  For $\delta>0$ we define the function
  \[
    \psi(x):=1_{[-\delta,0]}(x)\bigg(\frac{f_{-}}{f(x)} + 1_{\{f_{-} = 0\}}\bigg)
           +1_{(0,\delta]}(x)\bigg(\frac{f_{+}}{f(x)} + 1_{\{f_{+} = 0\}}\bigg),
  \]
  where we use the convention $0/0:=0$. We have $|\psi(x)-1|=O(|x|)$. Indeed, consider the case $f_+>0$. Then for $x>0$ small enough we have that $f(x)\geq f_+/2>0$ and thus
  \[
    |\psi(x)-1|= \bigg|\frac{f_{+}-f(x)}{f(x)}\bigg| \leq \frac{2}{f_+} |f_{+}-f(x)| = O(x)\quad \mbox{as }x\downarrow 0.
  \]
  The case $f_+=0$ is trivial and $x<0$ is treated in the same way.
  As a result, by choosing $\delta$ small enough we can find a constant $M>0$ such that
  \[
    |\psi(x)-1|\leq M|x|,\quad x\in [-\delta,\delta].
  \]
  In particular, $\psi$ satisfies the integrability assumption of Lemma~\ref{lem:approxlevy}. In the sequel, we denote by $\nu$ the L\'evy measure of $Z$, i.e., $\nu(\cdot)=F(\kappa_0^{-1}(\cdot))$.

  \vspace{.5em}\noindent\emph{Case 2a: $f_+>0$ and $f_->0$.} In this case we have
  \[
    \nu'(dx):=\psi(x)\,\nu(dx) = \bigg(1_{[-\delta,0)}(x)\frac{f_{-}}{|x|^{1+\alpha}}+ 1_{(0,\delta]}\frac{f_+}{|x|^{1+\alpha}}(x)\bigg)\, dx.
  \]
  Note that $\nu'$ is the L\'evy measure of an $\alpha$-stable L\'evy motion whose jumps were truncated at magnitude $\delta$. As above, this truncation changes the option price only at the order $O(T)$. Now Lemma~\ref{lem:approxlevy} and Step~1 yield that
  \begin{equation}\label{eq:ProofAlphaLike}
    E[|Z_T|]=C(\alpha,f_+,f_{-})\, T^{1/\alpha} + O(T).
  \end{equation}

  \vspace{.5em}\noindent\emph{Case 2b: $f_+=0$ and $f_->0$.} Note that in this case the formula for $\nu'$ is not the desired one in general, since we have $f(x)$ instead of $f_+$ in
   \[
    \nu'(dx)=\psi(x)\,\nu(dx) =  \bigg(1_{[-\delta,0)}(x)\frac{f_{-}}{|x|^{1+\alpha}}+ 1_{(0,\delta]}\frac{f(x)}{|x|^{1+\alpha}}(x)\bigg)\, dx.
  \]
  However, the previous argument does apply if $f(x)=f_+=0$ for all $x>0$, and we shall reduce to this case.
  Indeed, Lemma~\ref{le:StableLikeTrivialLimit} stated below shows that the positive jumps of $Z$ are of finite variation, hence of integrable variation by~\eqref{eq:integrabilityZ}. By subtracting these jumps from $Z$ and compensating, we achieve that
  $f(x)=f_+=0$ for all $x>0$ and Theorem~\ref{th:finitevariation} shows that this manipulation affects the option price only at the order $O(T)$. Hence we may conclude as in Case~2a to obtain~\eqref{eq:ProofAlphaLike}.

  The case where $f_+>0$ and $f_-=0$ is analogous. Finally, if $f_+$ and $f_-$ both vanish, Lemma~\ref{le:StableLikeTrivialLimit} shows that
  $Z$ is of finite variation and Theorem~\ref{th:finitevariation} yields that the option price is of order $O(T)$.
  Hence~\eqref{eq:ProofAlphaLike} again holds since $C(\alpha,0,0)=0$.

  \vspace{.5em}\noindent\emph{Step 3: Stable-like L\'evy case with $\alpha_+\neq\alpha_-$}. We consider the case where $\alpha:=\alpha_+>\alpha_-$. The idea is to reduce to the case where $\alpha_+=\alpha_-$ but $f_-=0$; i.e., we get rid of all the negative jumps and show that this induces an error of order $o(T^{1/\alpha})$.

  Let $\mu$ be the random measure associated with the jumps of $Z$. By setting
  \[
    Z^+:=x^+ * (\mu-\nu(dx)\,dt)\quad\mbox{and}\quad Z^-:=x^- * (\mu-\nu(dx)\,dt)
  \]
  we decompose $Z=Z^+ +Z^-$, where $Z^+$ and $Z^-$ are L\'evy martingales having only positive and negative jumps, respectively, and L\'evy measures given by $\nu^+(dx)=1_{(0,\infty)}(x)\,\nu(dx)$ and $\nu^-(dx)=1_{(-\infty,0)}(x)\,\nu(dx)$. Note the abuse of notation: $x^+$ and $x^-$ refer to the positive and negative part of $x$ while $Z^\pm$ and $\nu^\pm$ are new symbols.

  We observe that $Z^+$ has $(\alpha_+,\alpha_+)$-stable-like jumps, where the corresponding function $f'$ in Definition~\ref{def:alphaLike} satisfies $f'(x)=0$ for $x<0$. In particular the left limit is $f'_-=0$. The martingale $Z^-$
  has analogous properties for $\alpha_-$.

  From Step~2 we know that $E[|Z^-_T|]=O(T^{1/\alpha_-})$ if $\alpha_-\in(1,2)$ and we have also seen that
  $E[|Z^-_T|]=O(T)$ if $\alpha_-\in(0,1)$ due to finite variation. For the remaining case
  $\alpha_-=1$ we have $E[|Z^-_T|]=O(T|\log T|)$ by Remark~\ref{rk:afterJumpsThm}(f).
  As $\alpha_-<\alpha_+$ and $\alpha_+>1$ we therefore have
  \[
    E[|Z^-_T|]=o(T^{1/\alpha_+})
  \]
  in all three cases for $\alpha_-$. On the other hand, we know from Step~2 applied to $\alpha_+$ that $E[|Z^+_T|]\sim CT^{1/\alpha_+}$. Therefore the leading order coefficient for $Z$ is the same as for $Z^+$, i.e.,
  \[
    E[|Z_T|] = C\big(\alpha_+,f_+,0\big)\,T^{1/\alpha_+}+o(T^{1/\alpha_+}).
  \]
  The case $\alpha_+<\alpha_-$ is analogous.

  \vspace{.5em}\noindent\emph{Step 4: General case}. Proposition~\ref{pr:approx}(ii) implies that
  \[
    E[|S_T-Z_T|]=o(T^{(1+\gamma)/\beta})
  \]
  and in particular $E[|S_T-Z_T|]=o(T^{1/\alpha})$. In view of the previous steps, this completes the proof.
\end{proof}

The following result was used in the preceding proof.

\begin{lemma}\label{le:StableLikeTrivialLimit}
  Let $L$ be a L\'evy process with $(\alpha_+,\alpha_-)$-stable-like small jumps for some $\alpha_+,\alpha_-\in(0,2)$. If the function $f$ from Definition~\ref{def:alphaLike} satisfies \mbox{$f_+=0$}, then the positive jumps of $L$ are of finite variation, that is,
  $\sum_{t\leq T} (\Delta L_t)^+ < \infty$ for any $T<\infty$.
\end{lemma}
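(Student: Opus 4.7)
The plan is to reduce the claim to an integrability statement for the positive half of the L\'evy measure $\nu$ of $L$, and then to exploit the regularity of $f$ at $0$. Recall that, by the L\'evy--It\^o decomposition, the positive jumps of $L$ form a Poisson point process with intensity $\nu^+:=1_{(0,\infty)}(x)\,\nu(dx)$. A standard criterion (see, e.g., Sato~\cite{Sato.99}) gives
\[
  \sum_{t\leq T}(\Delta L_t)^+ <\infty \text{ a.s.} \quad\Longleftrightarrow\quad \int_{(0,\infty)}(x\wedge 1)\,\nu^+(dx)<\infty.
\]
Since $\nu$ is a L\'evy measure, $\nu^+(\{x>1\})<\infty$ and only finitely many positive jumps on $[0,T]$ exceed $1$ in size, so I only need to control the small positive jumps: it suffices to show $\int_0^{\delta} x\,\nu^+(dx)<\infty$ for some small $\delta>0$.

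Next I would plug in the stable-like description of $\nu$ near zero. Choosing $\delta$ small enough that Definition~\ref{def:alphaLike} applies on $(0,\delta]$, the integral to control becomes
\[
  \int_0^{\delta} x\cdot\frac{f(x)}{x^{1+\alpha_+}}\,dx = \int_0^{\delta}\frac{f(x)}{x^{\alpha_+}}\,dx.
\]
Now the hypothesis $f_+=0$ combined with the assumption $f(x)-f_+=O(x)$ as $x\downarrow 0$ gives $f(x)=O(x)$, so the integrand is $O(x^{1-\alpha_+})$ near the origin. Because $\alpha_+\in(0,2)$, the exponent satisfies $1-\alpha_+>-1$, making the integral convergent. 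This yields $\int_0^{\delta}x\,\nu^+(dx)<\infty$, which is exactly what was needed.

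There is no real obstacle: the only content of the lemma is that the assumption $f_+=0$, together with the H\"older-type control $f(x)-f_+=O(x)$ built into the definition of ``stable-like'' small jumps, provides just enough extra decay of $f$ at the origin to beat the $x^{-\alpha_+}$ singularity for any $\alpha_+<2$, pushing the positive side of $\nu$ into the finite-variation regime regardless of the nominal index $\alpha_+$.
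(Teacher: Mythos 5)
Your proof is correct and follows essentially the same route as the paper: both reduce the claim to the integrability of $x$ against the positive part of the L\'evy measure near the origin, and both use $f_+=0$ together with $f(x)-f_+=O(x)$ to get $f(x)/x^{\alpha_+}=O(x^{1-\alpha_+})$, which is integrable at $0$ since $\alpha_+<2$. The paper states the small-jump summability a bit more tersely, but the content is identical.
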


\begin{proof}
  Let $\nu$ be the L\'evy measure of $L$. Since $f(x)=f(x)-f_+=O(x)$ as $x\downarrow 0$, there exist $\delta>0$ and $M>0$ such that $f(x)\leq Mx$ for $x\in (0,\delta]$. Therefore,
  \[
    \int_0^\delta x \,\nu(dx) = \int_0^\delta \frac{xf(x)}{|x|^{1+\alpha_+}} \,dx \leq M \int_0^\delta \frac{x^2}{|x|^{1+\alpha_+}} \,dx <\infty,
  \]
  showing that the small positive jumps are summable. Of course, the large jumps are always summable.
\end{proof}

We now come to the proof of the second part of the theorem. The main difference to the above is that
we cannot use the $1$-stable process as a reference since it is not integrable. Instead, we use the
normal inverse Gaussian process. It is the symmetry of its L\'evy density around zero that forces us to impose the condition $f_+=f_-$ in
the theorem to apply our method. Indeed, we are not aware of a suitable process with sufficiently asymmetric density for which the absolute moment asymptotics are known.

\begin{proof}[Proof of Theorem~\ref{thm:jumps}(ii)]
  The proof has the same structure as for part~(i).

  \noindent\emph{Step 1: Normal inverse Gaussian case}. First let $Z$ be a symmetric normal inverse Gaussian process with L\'evy measure
  \[
    \nu'(dx)=\frac{\rho}{\pi|x|}K_1(|x|),
  \]
  where $\rho>0$ and $K_\theta$ denotes the modified Bessel function of the third kind of order~$\theta$. Then we are in the setting of Theorem \ref{thm:jumps}(ii) with  $f_+=f_{-}=\rho/\pi$ by the properties of $K_1$; see, e.g., \cite[Formula~(9.6.9)]{AbramowitzStegun.64}. The absolute moments of $Z_t$ were calculated explicitly and for all $t$ by Barndorff-Nielsen and Stelzer \cite[Corollary~4]{BarndorffNielsenStelzer.05}. By their formula and another property of Bessel functions (see \cite[Formula~(9.6.8)]{AbramowitzStegun.64}) we have
  \[
    E[|Z_T|]=\frac{2\rho}{\pi}e^{\rho T} TK_0(\rho T) \sim \frac{2\rho}{\pi}\,T|\log T|=(f_+ +f_{-})\,T|\log T|.
  \]

  \vspace{.5em}\noindent\emph{Step 2: General L\'evy case}. Now let $Z$ be a $(1,1)$-stable-like L\'evy martingale with L\'evy measure $\nu(dx)=f(x)/|x|^2 dx$, where by assumption $f$ satisfies $f_0:=\lim_{x \to 0} f(x)=f_+=f_{-}$. As the case $f_0=0$ again follows from Lemma~\ref{le:StableLikeTrivialLimit}, we may assume that $f_0>0$. Then there exists a small $\delta>0$ such that $f$ is bounded away from zero on $[-\delta,\delta]$ and  we can define
  \[
    \psi(x):=1_{[-\delta,\delta]}(x)\frac{f_0 K_1(|x|)|x|}{f(x)}. %
  \]
  By choosing $\rho:=\pi f_0$ we have $\nu'(dx)=\psi(x)\,\nu(dx)$ on $[-\delta,\delta]$, where $\nu'$ is as in Step~1.
  As $f_0 K_1(|x|)|x|=f_0 +O(|x|)$ by \cite[Formula (9.6.11)]{AbramowitzStegun.64}, we have
  \[
    \psi(x)-1_{[-\delta,\delta]}(x)=1_{[-\delta,\delta]}(x) \frac{f_0 K_1(|x|)|x| -f(x)}{f(x)} = O(|x|)
  \]
  due to the assumption that $f(x)=f_0 +O(|x|)$.
  Making $\delta>0$ smaller if necessary, we conclude that $\int_{-\delta}^\delta (\psi(x)-1)^2\,\nu(dx)<\infty$
  and now the assertion follows from Lemma~\ref{lem:approxlevy} and Step~1. The last step to the general martingale case
  is as in the proof of part~(i).
\end{proof}

\appendix

\section{Appendix}

The following lemma collects some standard facts about L\'evy processes that are used throughout the text. A L\'evy process
$L$ is an adapted c\`adl\`ag process with independent and stationary increments and $L_0=0$.

\begin{lemma}\label{le:LevyFacts}
  Let $L$ be a L\'evy process having triplet $(b,c,\nu)$ with respect to the truncation function $h(x)=x1_{|x|\leq 1}$.
  \begin{enumerate}[topsep=3pt, partopsep=0pt, itemsep=1pt,parsep=2pt]
    \item $\int_{\R} 1\wedge|x|^2\,\nu(dx)<\infty$.
    \item There is a decomposition $L_t=L^a_t + L^b_t+L^c_t+At$ into independent L\'evy processes such that $L^a$ is a compound Poisson process, $L^b$ is a purely discontinuous martingale with bounded jumps, $L^c=\sqrt{c}W$ is a scaled Brownian motion and  $A\in\R$.
    \item Let $p\in [1,\infty)$. Then $\int_{|x|>1} |x|^p\,\nu(dx)<\infty$ if and only if $E[|L_t|^p]<\infty$ for all $t\geq0$.
    In particular, if $L$ has bounded jumps, or equivalently if the support of $\nu$ is compact, then $E[|L_t|^p]<\infty$ for all $p\in[1,\infty)$.
    \item $L$ is a martingale if and only if the two conditions $\int_{|x|>1} |x|\,\nu(dx)<\infty$ and $b+\int_{|x|>1} x\,\nu(dx)=0$ hold.
    \item If $L$ is integrable, then $E[L_t]=tE[L_1]$ and $L_t-tE[L_1]$ is a martingale.
        \item If $L$ is a square-integrable martingale, then
         \[
          E[L_t^2] = E[[L,L]_t]=\br{L,L}_t=t\br{L,L}_1=tc+t\textstyle{\int_{\R}} |x|^2\,\nu(dx)<\infty.
         \]
    \item If $c=0$, $L$ is of finite variation if and only if $\int_{|x|\leq1} |x|\,\nu(dx)<\infty$ and of integrable variation
          if and only if $\int_{\R} |x|\,\nu(dx)<\infty$. In that case the total variation $\Var(L)_t:=\int_0^t|dL_s|$ is a L\'evy process satisfying
          $E[\Var(L)_1]= \int |x|\,\nu(dx) + |\int x\,\nu(dx)| $.
    \item For any $\delta>0$ there is a decomposition $L=L^{\leq\delta}+L^{>\delta}$ into two independent L\'evy processes satisfying $|\Delta L^{\leq\delta}|\leq\delta$ as well as $|\Delta L^{>\delta}|>\delta$ on the set $\{|\Delta L^{>\delta}|\}>0$. The corresponding L\'evy measures are given by $\nu^{\leq\delta}(dx)=1_{[-\delta,\delta]}(x)\,\nu(dx)$ and $\nu^{>\delta}(dx)=1_{\R\setminus[-\delta,\delta]}(x)\,\nu(dx)$. If $L^{>\delta}$ has no Brownian component, it is a compound Poisson process with drift.
    \item If $L$ is a martingale, the stochastic exponential $\cE(L)$ is again a martingale.
    \item If $L$ is a square-integrable martingale, then so is $\cE(L)$ and moreover
          $E[\cE(L)^2_t]=\exp(t\br{L,L}_1)$.
  \end{enumerate}
\end{lemma}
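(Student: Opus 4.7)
All ten items are standard and my plan is to derive each from the Lévy-Khintchine formula and the Lévy-Itô decomposition, citing Sato~\cite{Sato.99} and Jacod-Shiryaev~\cite{JacodShiryaev.03} where appropriate. The argument naturally splits by item.

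For (i)--(iii), (i) is the defining integrability of a Lévy measure and (ii) is the Lévy-Itô decomposition, both in Sato~\cite{Sato.99}; (iii) is the classical criterion relating $L^p$-moments of $L_t$ to tails of $\nu$. For (iv), the integrability of $L_1$ is equivalent to $\int_{|x|>1}|x|\,\nu(dx)<\infty$ by (iii), and differentiating the Lévy-Khintchine exponent at zero yields $E[L_1]=b+\int_{|x|>1}x\,\nu(dx)$; the martingale property then follows from stationary independent increments. (v) is immediate from (iv) by subtracting the deterministic drift $tE[L_1]$.

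For (vi), the identity $[L,L]_t=ct+\sum_{s\leq t}(\Delta L_s)^2$ combined with the fact that the predictable compensator of the jump sum is $t\int|x|^2\,\nu(dx)$ gives $\br{L,L}_t=t\bigl(c+\int|x|^2\,\nu(dx)\bigr)$, while $E[L_t^2]=E[\br{L,L}_t]$ for square-integrable martingales; linearity in $t$ is automatic by stationary independent increments. For (vii), the finite-variation characterization is Sato~\cite[Theorem~21.9]{Sato.99}; in the integrable-variation case one splits $L$ into its drift and its positive/negative jump subordinators (each obtained from the Poisson random measure restricted to a half-line) and reads off $E[\Var(L)_1]$ from Campbell's formula together with the drift term. (viii) follows by restricting the jump measure in the Lévy-Itô decomposition to $[-\delta,\delta]$ and its complement; independence of $L^{\leq\delta}$ and $L^{>\delta}$ comes from the independence of the Poisson random measure on disjoint Borel sets, and $L^{>\delta}$ is compound Poisson with drift when $c=0$ because $\nu(\R\setminus[-\delta,\delta])<\infty$.

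The most substantive points are (ix) and (x), which I would tackle together. For (ix), $\cE(L)=1+\cE(L)_-\sint L$ is a local martingale for any martingale $L$, and the second-moment bound from (x) upgrades it to a true martingale under square-integrability (which is how (ix) is invoked in the proof of Lemma~\ref{lem:approxlevy}). For (x), applying Itô's formula to $\cE(L)^2$ yields
\[
  \cE(L)_t^2=1+\text{(local martingale)}+\int_0^t \cE(L)_{s-}^2\,d[L,L]_s.
\]
Taking expectations and using $E[[L,L]_t]=\br{L,L}_t=t\br{L,L}_1$ produces the linear Volterra equation $E[\cE(L)_t^2]=1+\br{L,L}_1\int_0^t E[\cE(L)_{s-}^2]\,ds$, whose unique solution is $\exp(t\br{L,L}_1)$. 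The main technical obstacle is justifying the interchange of expectation and stochastic integral; I would handle this by first proving the identity for the stopped process $\cE(L)^{\tau_n}$ with $\tau_n\uparrow\infty$ a reducing sequence, deducing an a priori bound $E[\cE(L)_t^{2}]\leq \exp(t\br{L,L}_1)$ via Fatou, and then upgrading to equality by dominated convergence.
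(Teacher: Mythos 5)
For items (i)--(viii) you and the paper do essentially the same thing: defer to the standard literature, and your sketches are all correct. For (x) you take a genuinely different route. The paper applies Yor's formula to write $\cE(L)_t^2=\cE\big(2L+[L,L]-\br{L,L}\big)_t\exp(t\br{L,L}_1)$ and then invokes (ix) for the L\'evy martingale $2L+[L,L]-\br{L,L}$; you instead run It\^o's formula on $\cE(L)^2$ (correctly: $d(\cE(L)^2)=2\cE(L)_-^2\,dL+\cE(L)_-^2\,d[L,L]$), compensate $[L,L]$, and solve the resulting linear equation, with localization and Fatou to justify taking expectations. This is more self-contained, and the a priori bound $\sup_nE[\cE(L)_{t\wedge\tau_n}^2]\le e^{t\br{L,L}_1}$ also gives the uniform integrability needed to conclude that $\cE(L)$ is a true square-integrable martingale, which is all that the proof of Lemma~\ref{lem:approxlevy} actually uses.

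The gap is in (ix). As stated, (ix) asserts that $\cE(L)$ is a true martingale for \emph{any} L\'evy martingale $L$, with no square-integrability hypothesis. Your argument only upgrades the local martingale $\cE(L)=1+\cE(L)_-\sint L$ to a true martingale via the second-moment bound from (x), so it covers only square-integrable $L$. The general statement does not follow from a moment bound --- for general martingales $\cE(L)$ can be a strict local martingale --- and it is a genuine theorem about L\'evy processes; the paper cites \cite[Proposition~8.23]{ContTankov.04}, and alternatively one computes $E[\cE(L)_t]$ directly via the exponential formula for Poisson random measures and uses independent, stationary increments. Note also that you cannot lean on the paper's own structure to fill this in: the paper's proof of (x) applies (ix) to $2L+[L,L]-\br{L,L}$, which is integrable but in general \emph{not} square-integrable (that would require $\int|x|^4\,\nu(dx)<\infty$), so the full strength of (ix) is genuinely needed there. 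Since your proof of (x) bypasses (ix) entirely, the only missing piece is a reference or a direct argument for (ix) beyond the $L^2$ case.
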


\begin{proof}
  (i)--(viii) can be found in any advanced textbook about L\'evy processes; see, e.g.,~\cite{Sato.99}. Statement~(ix) is
  \cite[Proposition 8.23]{ContTankov.04}. One way to deduce the formula in~(x) is to
  use Yor's formula in
 \begin{align*}
   \cE(L)_t^2
     & = \cE\big(2L + [L,L] - \br{L,L} + \br{L,L}\big)_t \\
     & = \cE\big(2L + [L,L] - \br{L,L}\big)_t \exp(t\br{L,L}_1).
 \end{align*}
 Noting that $2L + [L,L] - \br{L,L}$ is a L\'evy martingale, (ix) yields the result.
\end{proof}

The following makes precise a remark from the introduction.

\begin{remark}\label{re:representation}
  Let $S$ be any c\`adl\`ag martingale with absolutely continuous predictable characteristics, then $S$ can be represented in the form~\eqref{eq:canonRepresentationIto}. Indeed, let
  \begin{equation*}%
    dB_t=b_t\,dt,\quad dC_t=\sigma_t^2\,dt,\quad d\nu_t=K_t(dx)\,dt
  \end{equation*}
  be the characteristics of $S$ with respect to the trivial truncation function $h(x)=x$ (cf.\ \cite[Chapter~II]{JacodShiryaev.03} for background). The latter choice is possible since $S$ is a martingale, which then implies $B=0$. Moreover, let $F$ be any atomless $\sigma$-finite measure on $\R$ such that $F(\R)=\infty$. Then there exist a Brownian motion $W$ and
  a Poisson random measure $N$ with compensator $F(dx)dt$ such that~\eqref{eq:canonRepresentationIto} holds. Moreover, $\kappa$ and $K$ satisfy
  the relation $K_t(A)=F(\kappa_t^{-1}(A))$ for any Borel set $A$, where $\kappa_t^{-1}$ denotes the preimage with respect to the spatial variable $x$. To be precise, the construction of $W$ and $N$ may necessitate an enlargement of the probability space, but this is harmless since we are interested only in distributional properties. We refer
  to Jacod~\cite[Theorem 14.68(a)]{Jacod.79} for further details.
\end{remark}
\bibliographystyle{plain}
\newcommand{\dummy}[1]{}

\end{document}